    \newcolumntype{Y}{>{\centering\arraybackslash}X}
    \def\@maketitle{%
    \newpage%
    \null%
    \begin{center}%
        \let\footnote\thanks %
        {\huge \@title %
          \par
        }
      \vskip 1.5em
        {\lineskip .5em
         \begin{tabular}[t]{c}
            \baselineskip=12pt
            \@author
         \end{tabular}
         \par
        }
    \end{center}
    \par
    \vskip 1.5em}
    \crefname{property}{Property}{Properties}
    \crefname{figure}{Figure}{Figures}
    \theoremstyle{plain}
    \newtheorem{theorem}{Theorem}[section]
    \newtheorem{definition}[theorem]{Definition}
    \newtheorem{lemma}[theorem]{Lemma}
    \theoremstyle{definition}
    \newtheorem{claim}{Claim}
    \newtheorem*{claim*}{Claim}
    \newenvironment{proofclaim}{
    	\noindent \emph{Proof.}
    }{\hfill $\diamond$ \\
    }
\tikzstyle{SmallVtx}=[fill=black, draw=black, shape=circle, minimum size=4pt, inner sep=0pt]
\tikzstyle{SimpleVtx}=[fill=white, draw=black, shape=circle, minimum size=5pt, inner sep=0pt]
\tikzstyle{zone}=[-, fill=green, draw=green, opacity=.2]
\tikzstyle{zone2}=[-, fill=blue, draw=blue, opacity=.2, dashed]
    \def\eg{{\em e.g.}\@\xspace}
    \def\ie{{\em i.e.}\@\xspace}
    \newcommand{\NP}{\textsf{NP}}
    \renewcommand{\O}{\mathcal{O}}
    \renewcommand{\o}{{o}}
    \newcommand{\X}{\mathcal{X}}
    \newcommand{\XH}{\mathcal{X}_H}
    \def\GH{generalized house}\def\GB{generalized bull}
    \newcommand{\IMC}[1]{#1\textrm{-IMC}}
    \renewcommand{\leq}{\leqslant}
    \renewcommand{\geq}{\geqslant}
    \def\im{\subseteq_{im}}
    \def\nim{\nsubseteq_{im}}
    \def\FH{\widehat{K_4}}
    \newcommand{\IM}[2]{$#1 \subseteq_{im} #2$}
    \newcommand{\authorcite}[1]{\citeauthor{#1}~\cite{#1}}
    \newtcolorbox{mypb}[2][]
    {
        enhanced,
        boxed title style = {colframe=white},
        attach boxed title to top left={
            xshift=0.5cm,
            yshift= -3.5mm,     
        },
        top=4mm,
        coltitle=black,
        beforeafter skip=\baselineskip,
        colframe = lightgray,
        colback  = white,
        colbacktitle  = white,
        coltitle = black,  
        fonttitle = \scshape,
        titlerule = 0mm, 
        title    = {#2},
        #1
    }
    \newcommand{\Pb}[4]{%
        \begin{mypb}{#1}
           \textbf{\textsf{Input}}: #2%
           \par\noindent%
           \textbf{\textsf{#4}}: #3?%
           \smallskip%
           \par\noindent%
        \end{mypb}
    }
    \title{Induced Minor Models. II. Sufficient conditions for polynomial-time detection of induced minors}
    \def\@maketitle{%
    \newpage%
    \null%
    \begin{center}%
        \let\footnote\thanks %
        {\huge \@title %
          \par
        }
      \vskip 1.5em
        {\lineskip .5em
         \begin{tabular}[t]{c}
            \baselineskip=12pt
            \@author
         \end{tabular}
         \par
        }
    \end{center}
    \par
    \vskip 1.5em}
    \author[1]{Clément Dallard}
    \author[2,4]{Maël Dumas}
    \author[3]{Claire Hilaire}
    \author[4]{Anthony Perez}
    \newcommand{\email}[1]{
      \texttt{#1}
    }
    \affil[1]{Department of Informatics, University of Fribourg, Switzerland\protect\\\email{clement.dallard@unifr.ch}}
    \affil[2]{Institute of Informatics, University of Warsaw, Poland\protect\\
            \email{mael.dumas@mimuw.edu.pl}}
    \affil[3]{FAMNIT and IAM, University of Primorska, Slovenia\protect\\\email{claire.hilaire@upr.si}}
    \affil[4]{Université d'Orléans, INSA Centre Val de Loire, LIFO EA 4022, Orléans, France\protect\\\email{anthony.perez@univ-orleans.fr}}
    \def\blfootnote{\gdef\@thefnmark{}\@footnotetext}
\begin{document}
    
    \maketitle%
        \blfootnote{This works is partly financed by the French \emph{Fédération de Recherche ICVL} (Informatique Centre-Val de Loire) and by the \emph{Slovenian Research and Innovation Agency} (research project J1-4008).}
    
    \begin{abstract}
        The \textsc{$H$-Induced Minor Containment} problem (\IMC{$H$}) consists in deciding if a fixed graph $H$ is an induced minor of a graph $G$ given as input, that is, whether $H$ can be obtained from $G$ by deleting vertices and contracting edges.
        Equivalently, the problem asks if there exists an induced minor model of $H$ in $G$, that is, a collection of disjoint subsets of vertices of $G$, each inducing a connected subgraph, such that contracting each subgraph into a single vertex results in $H$.

        It is known that \IMC{$H$} is \NP-complete for several graphs $H$, even when $H$ is a tree.
        In this work, we investigate which properties of $H$ guarantee the existence of an induced minor model whose structure can be leveraged to solve the problem in polynomial time.
        This allows us to identify four infinite families of graphs $H$ that enjoy such properties. 
        Moreover, we show that if the input graph $G$ excludes long induced paths, then $H$-IMC is polynomial-time solvable for any fixed graph $H$.
        As a byproduct of our results, this implies that \IMC{$H$} is polynomial-time solvable for all graphs $H$ with at most $5$ vertices, except for three open cases.

    \end{abstract}
    
    \section{Introduction}
    
    The notion of \emph{graph containment} has been intensively studied in the literature from both the algorithmic and structural viewpoints. %
    There are many ways to define whether a graph $G$ \emph{contains} a graph $H$, usually in terms of operations allowed on $G$ to obtain $H$.
    The most common operations are vertex deletion, edge deletion, and edge contraction.
    Any combination of these operations defines a \emph{graph containment relation}.
    The \emph{subgraph} relation only allows vertex and edge deletions, while the \emph{minor} relation also allows edge contractions.
    Their induced counterparts, namely the \emph{induced subgraph} and \emph{induced minor} relations, are defined analogously but without edge deletions. 
    These relations can then be used to define classes of graphs that \emph{exclude} a fixed collection $\mathcal{H}$ of graphs with respect to some fixed relation.
    Well-known graph classes can be characterized in terms of forbidden graphs.
    For example: weakly sparse graphs (sometimes simply referred to as sparse graphs) are those that exclude some fixed complete bipartite graph as a subgraph (see, \eg, \cite{Bonnet24,BONAMY2024215}); cographs are defined as graphs that exclude $P_4$ as an induced subgraph; planar graphs are characterized by excluding $K_5$ and $K_{3,3}$ as minors; graphs with bounded treewidth are those that exclude a planar graph as a minor; and chordal graphs correspond to graphs that exclude $C_4$ as an induced minor (see, \eg, \cite{BLS99}).
    A natural question that arises in this context is the complexity of determining whether a given graph $G = (V, E)$ \emph{contains} another graph $H$.
    If $H$ is part of the input, then the problem of determining if $H$ is a subgraph, induced subgraph, minor, or induced minor of a given graph $G$ is \NP-complete.\footnote{The \textsc{Hamiltonian Cycle} problem, with input graph $J$, can be reduced in polynomial-time to the considered problem, fixing $G$ to be the graph obtained from $J$ after subdividing every edge once and setting $H$ to be the $2|V(J)|$-vertex cycle.}
    However, if we consider $H$ as fixed, 
    then some of these problems can be solved in polynomial time.
    For the subgraph and induced subgraph relations, the corresponding problems can be solved in polynomial time by a simple brute-force approach, enumerating all (induced) subgraphs with $|V(H)|$ vertices.
    For the minor relation, there is the famous $\O(|V(G)|^3)$ algorithm by \authorcite{RS95}, later improved to $\O(|V(G)|^2)$ by \authorcite{KKR12}, and recently improved to almost linear time $\O((|V(G)|+|E(G)|)^{1+\o(1)})$ by \authorcite{KPS24}. 
    In sharp contrast, \authorcite{FKMP95} proved that when considering the induced minor relation, the problem is \NP-hard for some fixed graph $H$ on $68$ vertices. 
    It is thus natural to wonder for which graphs $H$ this problem is tractable.

    In this work, we focus on the \emph{induced minor} relation and consider, for several choices of $H$, the following problem: 
    
    \Pb{$H$-Induced Minor Containment (\IMC{$H$})}{A graph $G$.}{Does $G$ admit $H$ as an induced minor}{Question}

    This paper is the second of a series of paper started by a superset of the authors \cite{DDH+25} that focused on  structural properties of induced minor models, including bounds on treewidth and chromatic number of the subgraphs induced by minimal induced minor models. 
    This work complements the previous one by investigating which conditions on $H$ or $G$ are sufficient so that the problem becomes polynomial-time solvable.
    
    \paragraph{Related work} \authorcite{FKMP95} asked whether \IMC{$H$} can be solved in polynomial if $H$ is a tree or a planar graph. %
    Motivated by this question, \authorcite{FKM12} showed that \IMC{$H$} can be solved in polynomial time for all but one forest $H$ on at most $7$ vertices.
    The complexity of the remaining forest (obtained from two claws after identifying one of their leaves) is still open (and has recently been the subject of an open question at a workshop~\cite{CMPSA22}). 
    They also showed that, when $H$ is a subdivided star or obtained by adding at least two leaves to both of the endpoints of an edge, the problem remains polynomial-time solvable.
    Recently, \authorcite{KL24} eventually settled the complexity of the problem
    when $H$ is a tree and showed that there exists a tree (with over $2^{300}$ vertices) for which \IMC{$H$} is \NP-hard.
    In the same paper, the authors also gave a randomized polynomial-time algorithm that, given two graphs $G$ and $H$, outputs either an induced minor model of $H$ in $G$ or a balanced separator of $G$ of size $\O(\min(\log |V(G)|, |V(H)|^2) \cdot \sqrt{|V (H)| + |E(H)|} \cdot \sqrt{|E(G)|})$.
    In particular, if $H$ is fixed, this implies subexponential-time algorithms for several \NP-hard problems on $H$-induced-minor-free graphs.
    Previous to that, \authorcite{KORHONEN2023206} showed that graphs with large treewidth and bounded degree contain a large grid as an induced minor, which implies that, for every planar $H$, there is a subexponential time algorithm for \textsc{Max Weight Independent Set} on $H$-induced-minor-free graphs.
    
    Another recent result, by \authorcite{NSS24}, states that finding $s\geqslant 1$ pairwise anticomplete, disjoint cycles can be done in polynomial time.
    This in particular implies that if $H$ is a disjoint union of $s$ triangles, then \IMC{$H$} can be solved in polynomial time, thus answering the open question about the complexity of \IMC{$2C_3$} asked by \authorcite{FKM12}.
    
    Several results have also been obtained when restricting the input graph.
    For instance, \authorcite{FKMP95} showed that \IMC{$H$} is polynomial-time solvable, for any graph $H$, in the class of planar graphs (note that if $H$ is not planar, then the problem becomes trivial).
    Then, \authorcite{HKPST12} extended this result by showing that \IMC{$H$} can be solved efficiently on proper minor-closed graph classes,\footnote{A graph class $\mathcal{G}$ is \emph{minor-closed} whenever any minor of graph $G \in \mathcal{G}$ belongs to $\mathcal{G}$. It is \emph{proper} if it is not the class of all graphs.} %
    for any planar graph $H$.
    In similar fashion, \authorcite{GKP13} proved that the problem is polynomial-time solvable in AT-free graphs while \authorcite{BGHHKP11} showed the same result for chordal graphs. 
    
    Instead of considering graphs that \emph{exclude} some fixed graph $H$ as an induced minor, some works analyze the structure of graphs that \emph{contain} $H$ as an induced minor.
    For instance, \authorcite{CHKTW24} showed that if a graph contains $K_{3,4}$ as an induced minor, then it must contain a triangle or a theta\footnote{A \emph{theta} is a graph consisting of three internally, anticomplete, vertex disjoint paths that share the same two vertices as endpoints.} as an induced subgraph.

    \paragraph{Our results} 
    In the first paper of the series~\cite{DDH+24,DDH+25}, it is proved among other results  that if $H$ is the $4$-wheel, the $5$-vertex complete graph minus an edge, or a complete bipartite graph $K_{2,q}$, then there is a polynomial-time algorithm to solve \IMC{$H$}. %
    We carry on this line of research by settling the complexity status of \IMC{$H$} for all but three graphs with up to five vertices. 
    For each such graph $H$, we show that \IMC{$H$} can be solved in polynomial-time.
    Many cases actually follow from more general results, which we prove in this paper, that settle the complexity of \IMC{$H$} for some infinite classes of graphs.
    In particular, we obtain the following results.  Formal definitions of the family of graphs can be found in \cref{sec:BB_property}; see \cref{fig:order5list} for the list of such graphs with 5 vertices.  Informally, flowers are intersection of paths, cycles and diamonds in one vertex.

    \begin{restatable}{theorem}{ThmFlowers}\label{thm:flowerPoly}
        If $H$ is a flower, then \IMC{$H$} is polynomial-time solvable.
    \end{restatable}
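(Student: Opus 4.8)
The plan is to exploit the defining feature of a flower: if $c$ denotes its center, then $H - c$ is a disjoint union of paths, one per petal (a path petal loses its endpoint, a cycle petal is cut open into a path, and a diamond loses its degree-$3$ center and becomes a $P_3$). Consequently, any induced minor model of $H$ in $G$ splits into a connected branch set $B_c$ realizing the center and, for each petal, a family of branch sets realizing an induced path; taken over all petals these families are pairwise anticomplete and attach to $B_c$ only through a bounded interface. First I would make this precise and argue, using the structural results on minimal induced minor models from the companion paper \cite{DDH+25}, that one may restrict attention to models in which $B_c$ is a tree and each petal's branch sets induce a path. In particular the subgraph induced by such a minimal model has bounded treewidth, which is the structural feature a flower needs to satisfy the general sufficient condition developed in this section.

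Next I would bound and guess the interaction between the petals and the center. For each petal only a bounded number of its branch sets are adjacent to $B_c$: one for a path petal, the two neighbours of $c$ for a cycle petal, and all three remaining vertices for a diamond petal. Hence the attachment vertices on the center side, together with one representative per attaching branch set on the petal side, form a set of bounded size, and guessing these $\O(1)$ vertices costs only polynomial time. After this guess the task decomposes into three requirements: (i) realize each petal as the prescribed induced structure anchored at its guessed ports; (ii) keep distinct petals pairwise disjoint and anticomplete; and (iii) connect all center-side ports by a connected set $B_c$ that is anticomplete to every petal except at the declared attachment edges.

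For (i), a path petal amounts to detecting an induced path with prescribed endpoint adjacencies, which is polynomial because any subpath of an induced path is induced, so $P_\ell \im G$ reduces to induced-subgraph search; a diamond petal is a fixed-size gadget handled by brute force over its few branch sets together with short connecting paths; and a cycle petal $C_m$ requires an induced cycle of length at least $m$ through its ports, which is the delicate case and which I would treat with the long-induced-cycle and anticomplete-cycle machinery underlying \cite{NSS24}. For (ii) and (iii) jointly I would cast the problem in the style of induced disjoint paths: the petal models must be mutually anticomplete while $B_c$ links their ports without creating forbidden edges to petal interiors.

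The main obstacle is precisely this simultaneous enforcement of anticompleteness between petals and connectivity of the center: each subproblem is tractable in isolation, but forbidding edges between the unknown petal models while demanding that $B_c$ link them is where the difficulty concentrates, and it is compounded by cycle petals, whose models may be arbitrarily long induced cycles rather than fixed-size subgraphs. I expect the cleanest resolution to be, rather than an ad hoc combination of the subroutines, a proof that every flower meets the general condition isolated in this section, namely that a minimal flower model has the bounded-treewidth, few-attachment shape described above; verifying that flowers satisfy that condition is therefore the crux of the argument, and \IMC{$H$} then follows in polynomial time.
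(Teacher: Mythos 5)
Your proposal correctly identifies the structural feature of flowers ($H\setminus\{u\}$ is a disjoint union of paths), but it stops short of the lemma that actually powers the paper's argument, and the condition you propose in its place would not suffice. The sufficient condition isolated in this section is the $S$-non-trivial property: the paper proves (\cref{lem:flower1BB}) that every flower with center $u$ is $\{u\}$-NT, meaning that whenever $H \im G$ there exists a model in which \emph{every} bag except the center's is a single vertex; polynomiality is then immediate from \cref{th:1BBpoly}. This is established by a rerouting argument on a model minimizing the bags of $H\setminus\{u\}$: for a sepal $abc$, one locates a degree-$3$ vertex $x_b\in X_b$ at the intersection of suitably chosen shortest paths, keeps the singletons $\{x_a\},\{x_b\},\{x_c\}$, and absorbs all remaining vertices of $X_a\cup X_b\cup X_c$ into $X_u$; stamens and petals are trivialized by \cref{lem:minModelSmallDeg,lem:minModelPath}. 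By contrast, the condition you propose --- that minimal models induce bounded-treewidth subgraphs attaching to the center through a bounded interface --- and which you yourself flag as the unproved crux, cannot be the engine of a polynomial algorithm: bounds on the treewidth of subgraphs induced by minimal models hold in far greater generality (this is a theme of the companion paper \cite{DDH+25}), and the analogous few-attachment structure is present for every tree $H$, yet \IMC{$H$} is \NP-hard for some tree $H$ \cite{KL24}. A smaller but genuine flaw: you cannot restrict to models in which the center's branch set is a tree, since models are induced and a minimal connected bag may be forced to induce cycles.

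The algorithmic half of the plan breaks as well. After guessing ports, your steps (i)--(iii) amount to an anchored induced-disjoint-paths computation: realizing a petal (a path of $H\setminus\{u\}$ attached to $u$ at both extremities, hence a cycle through the center) through prescribed attachment vertices, while keeping it anticomplete to the other, still unknown, petal models and to the center set, contains as a special case the problem of finding two vertex-disjoint anticomplete induced paths between prescribed terminal pairs --- equivalently, an induced cycle through two prescribed vertices --- which is \NP-complete; and the machinery of \cite{NSS24} detects pairwise anticomplete cycles \emph{without} prescribed attachments, so it does not apply to this anchored variant. The paper's algorithm avoids all of this precisely because of the $\{u\}$-NT property: one enumerates the $\O(n^{|V(H)|-1})$ placements of singleton bags, deletes the closed neighborhoods of the bags of non-neighbors of $u$, and checks whether some connected component of the remaining graph is adjacent to the bag of every neighbor of $u$. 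Since adjacency to neighbors is a monotone ``at least one edge'' requirement, an entire connected component may serve as $X_u$, and no induced-path or induced-cycle subroutine is ever needed.
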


    The generalized houses and bulls are obtained from houses and bulls by  subdividing the edges not in the triangle.
    
    \begin{restatable}{theorem}{ThmHousesAndBulls}\label{th:housePoly}
        If $H$ is a \GH{} or a \GB{}, then \IMC{$H$} is polynomial-time solvable.
    \end{restatable}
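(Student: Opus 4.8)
The plan is to derive both statements from the sufficient condition established in \cref{sec:BB_property}, exactly as \cref{thm:flowerPoly} is, by exhibiting the relevant structure in each family. The key observation is that a \GH{} and a \GB{} share the same shape: a triangle $T$ playing the role of a bounded ``dense core,'' together with subdivided edges that are induced paths attached to $T$ only at their endpoints. For a \GH{} we have $T=\{x,y,z\}$ and a single $x$--$y$ path $Q$ (of fixed length) whose interior is anticomplete to $z$; together with the edge $xy$ this closes a cycle, so a \GH{} is simply a hole with one extra vertex seeing exactly one of its edges. For a \GB{} we have $T=\{t_1,t_2,t_3\}$ and two pendant induced paths, one rooted at $t_1$ and one at $t_2$, that are mutually anticomplete and anticomplete to $t_3$. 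In both cases deleting $T$ leaves a disjoint union of paths, which is the shape the condition requires; the work is to check that the quantitative hypotheses that separate these tractable cases from the \NP-hard trees of \cite{KL24} are met, and then to invoke the condition.

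Since $H$ is fixed, the number of branch sets and the lengths of the subdivided edges are constants, so the only genuine difficulty is realizing each subdivided edge by a (possibly long) induced path while respecting the anticomplete constraints. The algorithm underlying the condition first guesses the bounded interface that the core presents to the rest of the model (for a \GH{}, representatives and relevant adjacency data for $X_x,X_y,X_z$; for a \GB{}, the analogue for $X_{t_1},X_{t_2},X_{t_3}$), of which there are only polynomially many choices. Once the core is fixed, realizing one subdivided edge reduces to finding an induced path of prescribed length between two prescribed regions that stays anticomplete to a fixed set $A$; here the basic subroutine is that a shortest path between two vertices in $G-N[A]$ is induced in $G$ and automatically anticomplete to $A$, and the required length is obtained by contracting a long enough induced path down to the prescribed number of internal branch sets.

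For the \GH{} there is only one subdivided edge, so no two path branch sets ever interact: after guessing $X_z$ and the two attachment points, I would only need a sufficiently long induced $x$--$y$ path living in $G-N[X_z]$, i.e.\ a long enough hole through a fixed edge and avoiding $N[X_z]$. This single-path routing is precisely what the condition handles, so the \GH{} case should fall out essentially immediately once the structural description above is recorded.

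The hard part will be the \GB{}, because its two pendant paths must additionally be anticomplete to one another: this is the induced-disjoint-paths phenomenon, which is \NP-hard in general for two terminal pairs. The crux is therefore to argue that for a \GB{} the two horns can nonetheless be placed simultaneously in polynomial time, exploiting that each horn has a \emph{free} endpoint and that they attach at the adjacent core vertices $t_1,t_2$; I would set down the first horn as a shortest admissible induced path and then use a rerouting/exchange argument to show that, whenever a valid model exists, one exists in which the second horn can be found anticomplete to the first, so that shortest realizations do not obstruct each other. Making this exchange argument precise -- and checking that it is exactly the hypothesis abstracted by the condition of \cref{sec:BB_property} -- is the technical heart of the proof; everything else is bookkeeping over the constant-size pattern.
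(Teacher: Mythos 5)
Your high-level instinct---guess a constant-size skeleton, then complete it greedily via shortest paths plus an exchange argument---does match the paper's strategy in spirit, but the proposal has a genuine gap: the condition you invoke does not cover the \GB{}, and you mislocate the difficulty. The framework of \cref{sec:BB_property} (\cref{th:1BBpoly}) only applies to graphs that are $S$-NT with $|S|\leq 1$, and the \GB{} is provably \emph{not} $\{u\}$-NT for any single vertex: the paper exhibits a graph (\cref{fig:GH_1BB_model}) in which \emph{every} model of the subdivided bull has two non-trivial bags. So there is no hypothesis ``abstracted by the condition of \cref{sec:BB_property}'' for you to check; the two-non-trivial-bag case needs a bespoke algorithm. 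Moreover, the difficulty is not where you place it. Since $H$ is fixed, the subdivided horns have \emph{constant} length, and by \cref{lem:minModelSmallDeg,lem:minModelPath} their vertices (and the apex $a$) may be assumed to have trivial bags; there are no ``possibly long induced paths'' to route for the horns and no induced-disjoint-paths phenomenon between them---they are guessed outright in the polynomial enumeration of constant-size premodels. What can be large, and must be constructed jointly, are the bags $X_u$ and $X_v$ of the two triangle vertices, which must be connected, adjacent to each other and to the prescribed trivial bags, and non-adjacent elsewhere; note that they are \emph{adjacent}, not anticomplete, so your two-terminal-pair \NP-hardness framing does not describe the actual obstacle.

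The paper's actual content, which your sketch leaves as a promissory note, consists of: (i) \cref{lem:2BBhouses}, a minimization argument showing both families are $\{u,v\}$-NT with $X_v$ inducing a path with controlled attachments; (ii) \cref{lem:1BBhouses}, a sliding construction showing that the \GH{} (but not the \GB{}) is $\{u\}$-NT, which is precisely what makes the house case reducible to \cref{th:1BBpoly}---you assert this case ``falls out essentially immediately,'' but this lemma is the non-trivial structural work; and (iii) for the \GB{}, two exchange claims proved inside \cref{th:housePoly}: any $x_u$--$x_b$ path avoiding $N[x_c]$ may be committed to $X_u$, and then, for some $y_c\in N(x_c)$, any shortest $x_v$--$y_c$ path avoiding $N[x_b]\cup\{x_a,x_c\}$ may be committed to $X_v$. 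These claims are the rerouting argument you correctly anticipate is needed, but announcing that such an argument should exist is not a proof of it. Finally, your worry about ``quantitative hypotheses separating these cases from the \NP-hard trees'' of \cite{KL24} has no counterpart in the proof: membership in these families is purely structural, and no quantitative check appears. As written, the proposal establishes neither the \GH{} nor the \GB{} case.
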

    
    The graph $S_{k,p}$ is the graph obtained by adding all edges between a clique of size $k$ and an independent set of size $p$.
    
    \begin{restatable}{theorem}{ThmCompleteSplits}\label{thm:splitPoly}
        Let $k \leq 3$ and $p$ be positive integers.
        Then \IMC{$S_{k,p}$} is polynomial-time solvable.
    \end{restatable}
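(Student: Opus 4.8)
The plan is to show that an induced minor model of $S_{k,p}$ can always be reshaped so that the $p$ branch sets forming the independent part are \emph{single vertices}, leaving only the $k \le 3$ branch sets forming the clique to carry any large structure. Writing the clique branch sets as $B_1,\dots,B_k$ and the independent ones as $A_1,\dots,A_p$, I would first establish the normalization: $S_{k,p} \im G$ if and only if $G$ contains pairwise disjoint, pairwise adjacent, connected sets $B_1,\dots,B_k$ together with an independent set $I$ of size $p$, disjoint from $\bigcup_i B_i$, such that every vertex of $I$ has a neighbour in each $B_i$. The ``if'' direction is immediate, since contracting every $B_i$ produces the clique and the vertices of $I$ become the independent part, each joined to the whole clique.

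For the ``only if'' direction I would start from an arbitrary model and first minimalize each independent branch set $A_\ell$, turning it into a minimal connected set meeting $N(B_1),\dots,N(B_k)$. Because $k \le 3$, such a set is a tree with at most three leaves, i.e. a subdivided star with a single centre $c_\ell$ whose (at most three) legs reach the distinct $B_i$. The crucial step (``absorption'') is to keep only $c_\ell$ as the leaf and to push each leg of $A_\ell$ entirely into the branch set $B_i$ it reaches. Since the legs are internally disjoint and already attached to $B_i$, every enlarged $B_i$ stays connected and the $B_i$ remain pairwise disjoint; the vertex $c_\ell$ keeps a neighbour in each enlarged $B_i$; and the new leaves remain pairwise non-adjacent, as distinct $A_\ell$ were anticomplete in the original model. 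As the only non-edges of $S_{k,p}$ are those among independent vertices, no required non-adjacency is destroyed, so the reshaped collection is again a model.

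Granting the normalization, the algorithm guesses the independent set $I=\{a_1,\dots,a_p\}$ in time $O(n^p)$, rejecting guesses that are not independent, and then asks whether $G' = G - I$ contains $k \le 3$ pairwise disjoint, pairwise adjacent, connected sets each of which meets all of $N(a_1),\dots,N(a_p)$. For $k=1$ this holds exactly when some connected component of $G'$ meets every $N(a_j)$. For $k\in\{2,3\}$ I would reduce to a bounded disjoint-paths instance: a minimal such $B_i$ is a tree with at most $p$ leaves, so I would guess, for each $i$ and $j$, a vertex $t_{i,j}\in B_i\cap N(a_j)$, the at most $p-1$ branch vertices of each tree, and one edge witnessing each of the $\binom{k}{2}\le 3$ adjacencies $B_i\sim B_{i'}$ --- altogether $n^{O(p)}$ guesses. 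It then remains to link, inside each $B_i$, its guessed terminals by internally disjoint paths, all paths across the (at most three) trees being pairwise vertex-disjoint; this is the $c$-Disjoint Paths problem for a number of pairs $c$ bounded in terms of $k$ and $p$, and hence solvable in polynomial time for every fixed pattern~\cite{RS95}. Ordinary vertex-disjoint paths suffice, since the only non-adjacencies to be preserved lie inside $I$, which has been deleted from $G'$; any further edges among the $B_i$, or between a $B_i$ and $I$, are edges of $S_{k,p}$ and thus harmless.

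The main difficulty I expect is the normalization step: one has to verify that the absorption never violates the induced constraint, i.e. that collapsing each connector to its centre simultaneously keeps the leaves pairwise anticomplete and keeps every enlarged clique branch set connected and adjacent to all leaves. The hypothesis $k\le 3$ enters in two places --- it forces each connector to be a spider with a single centre, and it bounds the number of clique branch sets so that the second phase is a disjoint-paths instance with only a constant number of terminals --- which is precisely what keeps the problem tractable.
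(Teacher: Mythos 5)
Your proposal is correct and follows essentially the same route as the paper: your normalization step (extracting a spider with centre $c_\ell$ from each independent branch set and absorbing its legs into the clique bags, which is safe precisely because the clique is universal in $S_{k,p}$) is exactly the paper's proof that $S_{k,p}$ with $k\leq 3$ is $K$-NT, and your algorithm likewise guesses the $p$ independent vertices and then solves a purely non-induced, rooted connectivity problem for the $k\leq 3$ clique bags in the graph with the independent vertices removed. The only difference is in the final black box, and it is cosmetic: the paper guesses attachment sets $Z_1,\dots,Z_k$ and invokes the rooted minor containment algorithm of Korhonen, Pilipczuk and Stamoulis, whereas you guess tree skeletons (terminals, branch vertices, adjacency-witness edges) and invoke Robertson--Seymour disjoint paths --- the internal-disjointness issue at shared branch vertices that you gloss over is handled by standard guessing of branch-vertex neighbours or by the rooted/folio version of their result, so both resolutions of the subproblem are sound.
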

    
    We emphasize that the class of flowers contains all subdivided stars. %
    Therefore, \cref{thm:flowerPoly} generalizes the result by \authorcite{FKM12}. %
    Let us also mention that \authorcite{MP24} independently showed that \IMC{House} can be solved in polynomial time (see \cref{fig:order5list} for a representation of the house).
    Their approach, different from ours, relies on an algorithm for detecting the house as an induced topological minor, and then reducing the induced minor case to the former. 
    In the same paper, the authors make use of our structural results for the flowers to detect butterflies (two triangles sharing one vertex) as an induced minor in polynomial time.
    
    \medskip
    
    When considering restricted input graphs, we broaden the complexity landscape by showing the following result on $P_t$-free graphs, that is, graphs without induced paths on $t$ vertices.
    We refer the reader to \cref{fig:order5list} for a representation of all graphs mentioned hereafter.
     
    \begin{restatable}{theorem}{ThmPtFree}\label{thm:ptfree}
        For any graph $H$ and any positive integer $t$, \IMC{$H$} is polynomial-time solvable in $P_t$-free graphs. 
    \end{restatable}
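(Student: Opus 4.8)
The plan is to reduce \IMC{$H$} on $P_t$-free graphs to a bounded-size search by showing that whenever $H \im G$, there is an induced minor model of $H$ in $G$ that uses only a constant number of vertices, after which brute force over all such small models suffices. Write $h := |V(H)|$. The only place where $P_t$-freeness is used is the following elementary observation: since every induced subgraph of $G$ is again $P_t$-free, any connected induced subgraph of $G$ has diameter at most $t-2$, because a shortest path realizing the diameter is an induced path and hence has at most $t-1$ vertices. In particular, for any branch set $B$ of a model, $G[B]$ is connected and $P_t$-free, so any two vertices of $B$ are joined by a path of length at most $t-2$ inside $G[B]$.

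First I would formalize a shrinking step. Suppose $\{B_v : v \in V(H)\}$ is an induced minor model of $H$ in $G$. For each edge $uv \in E(H)$ fix a single witnessing edge between $B_u$ and $B_v$, with endpoints $a_{uv} \in B_u$ and $b_{uv} \in B_v$. For each $v \in V(H)$ collect the \emph{contact vertices} $T_v := \{a_{vw} : vw \in E(H)\} \subseteq B_v$; note $|T_v| \le \deg_H(v) \le h-1$. Using the diameter bound, I would build inside the connected graph $G[B_v]$ a connected subset $B_v' \supseteq T_v$ by starting from one contact vertex and repeatedly attaching a shortest path (of length at most $t-2$) to each remaining contact vertex; if $v$ is isolated in $H$ I simply take any single vertex of $B_v$. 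This yields $|B_v'| \le 1 + (h-2)(t-2)$.

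The key point is that the shrunk family $\{B_v'\}$ is again a valid model, and crucially this preserves the \emph{induced} (anti-adjacency) constraint for free: for a non-edge $uv$ of $H$ we have $B_u' \subseteq B_u$ and $B_v' \subseteq B_v$, so the absence of edges between $B_u$ and $B_v$ immediately gives the absence of edges between $B_u'$ and $B_v'$. Each $B_v'$ is connected by construction, and for every edge $uv \in E(H)$ the retained endpoints $a_{uv} \in B_u'$ and $b_{uv} \in B_v'$ witness the required adjacency; hence contracting $\{B_v'\}$ yields exactly $H$. Consequently $H \im G$ if and only if $G$ admits such a model in which every branch set has at most $1 + (h-2)(t-2)$ vertices, so the whole model touches at most $N := h\,(1 + (h-2)(t-2))$ vertices.

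Finally, since $N$ depends only on $H$ and $t$, it is a constant, and the algorithm just enumerates all candidate models of this bounded size: iterate over all ordered $h$-tuples of pairwise disjoint vertex subsets $(B_1', \dots, B_h')$ of total size at most $N$, of which there are $\O(|V(G)|^{N})$, and for each tuple check in constant time that every part is connected and that the adjacencies and non-adjacencies between parts match those of $H$. This runs in time $|V(G)|^{\O(h^2 t)}$, which is polynomial for fixed $H$ and $t$. I do not expect a genuine obstacle once the diameter bound is in place; the only point requiring care is that shrinking a branch set might a priori create a new adjacency across a non-edge of $H$, and this is ruled out trivially because we only ever retain vertices of the original branch sets. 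Thus the entire contribution of $P_t$-freeness is concentrated in the bounded-diameter Steiner-connection step.
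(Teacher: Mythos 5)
Your proposal is correct and follows essentially the same route as the paper: both bound every bag of a model by $\O(t\cdot|V(H)|)$ vertices by taking unions of shortest (hence induced, hence short) paths inside a bag to vertices witnessing the required adjacencies, and then brute-force over all models of bounded total size. The only cosmetic difference is that the paper phrases the shrinking via minimality of the bags of a minimal model, whereas you directly construct the shrunk model and verify it (correctly noting that non-adjacencies are preserved because the new bags are subsets of the old ones).
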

    
    \Cref{thm:ptfree} allows us to show that \IMC{Gem} and \IMC{$\FH$} are polynomial-time solvable (see \cref{thm:gem,thm:fullhouse}). 

    Note that $Gem$-induced-minor-free graphs and $\FH$-induced-minor-free graphs may contain arbitrarily long induced paths. 
    Nonetheless, leveraging their structure, we show that \IMC{$Gem$} and \IMC{$\FH$} on general graphs can be reduced to graphs without long induced paths.
    Thus, the two problems are polynomial-time solvable as a consequence of \cref{thm:ptfree} (see \cref{thm:gem,th:fhPoly}). Note that the polynomial-time solvability of \IMC{$Gem$} and \IMC{$\FH$} also follows from the fact that $Gem$-induced-minor-free graphs and $\FH$-induced-minor-free graphs have bounded clique-width (by results of \authorcite{BOS18}) and that \IMC{$H$} can be solved in polynomial time on graphs of bounded clique-width~\cite{CMR00,FK24}.

    \section{Preliminaries}\label{sec:preliminaries}

    We consider simple, undirected graphs $G = (V,E)$, where $V$ denotes the \emph{vertex set} and $E$ the \emph{edge set}.
    We may also use $V(G)$ to denote the vertex set of $G$ and $E(G)$ its edge set to clarify the context.  
    Given a vertex $u \in V$, the \emph{open neighborhood} of $u$ is the set %
    $N_G(u) = \{v \in V:\ uv \in E\}$. 
    The \emph{closed neighborhood} of $u$ is defined as $N_G[u] = N_G(u) \cup \{u\}$. 
    Given a subset of vertices $S \subseteq V$, $N_G[S]$\ is the set $\cup_{v \in S} N_G[v] $
    and $N_G(S)$ is the set $N_G[S] \setminus S$. 
    We will omit the mention to $G$ whenever the context is clear. 
    Given a set of vertices $S \subseteq V$, the subgraph of $G$ \emph{induced by} $S$, denoted $G[S]$, is the graph $(S,E_S)$ where $E_S = \{uv \colon \{u,v\} \in S \times S$\}.  
    In a slight abuse of notation, we use $G \setminus S$ to denote the graph induced $G[V \setminus S]$.
    Given two sets of vertices $A,B \subseteq V$, we say that $A$ and $B$ are 
    \emph{adjacent} if there exist $u \in A$ and $v \in B$ such that $uv$ is an edge of $G$.
    
    Given an edge $uv$ of $G$, we define the \emph{contraction} of $uv$ as the graph 
    obtained from $G$ by removing $u$ and $v$ and by adding a new vertex $w$ with neighborhood 
    $N(\{u,v\})$.
    Similarly, the \emph{subdivision} of $uv$ is obtained by 
    removing the edge $uv$ from $E$ and inserting a new vertex $w$ and edges $wu, wv$. 
    
    We may denote a path $P$ with $\ell$ vertices by a sequence $p_1 \dots p_\ell$ of vertices such that two consecutive vertices in the sequence are adjacent. 
    The path on $\ell$ vertices is denoted by $P_\ell$.
    The vertices $\{p_2, \ldots, p_{\ell-1}\}$ are called the \emph{internal vertices} of $P_\ell$.
    Similarly, a sequence $p_1 \dots p_\ell p_1$ describes a cycle $C$ with $\ell$ vertices such that two consecutive vertices in the sequence are adjacent.
    The cycle on $\ell$ vertices is denoted by $C_\ell$. 
    The edges of a path, or of a cycle, are the edges between consecutive vertices of the sequence and the \emph{length} of a path, or cycle, is the number of edges it has. 
    Given a path $P$ and some vertices $u,v$ of $P$, we let $uPv$ be the subpath of $P$ with extremities $u$ and $v$. If $w$ is adjacent to $u$, then $wuPv$ is the path obtained by adding the edge $wu$ to $uPv$, and similarly, if $w$ is adjacent to $v$, then $uPvw$ is the path obtained by adding the edge $vw$ to $uPv$. 
    
    \paragraph{Induced minor models} A graph $H$ is an \emph{induced minor} of $G$, 
    denoted \IM{H}{G}, whenever $H$ can be obtained from $G$ by removing vertices and contracting edges. 
    An \emph{induced minor model} of $H$ in $G$, or simply a \emph{model} of $H$, is a collection $\XH = \{X_u \colon u \in V(H)\}$ of pairwise disjoint non-empty subsets of $V(G)$ such that:
    \begin{itemize}
        \item for $u \in V(H)$, $G[X_u]$ is connected, and
        \item for $u \neq v \in V(H)$, $X_u$ and $X_v$ are adjacent if and only if $uv \in E(H)$.
    \end{itemize}
    
    Each set $X_u \in \XH$ is called a \emph{bag} of $\XH$. 
    The subgraph of $G$ \emph{induced by} $\XH$ is the subgraph induced by the union of the bags of $\XH$. 
    We say that a bag $X_u$ is trivial if $|X_u|=1$.
    
    A model $\XH'$ of $H$ is said to be \emph{included} in another model $\XH$ of $H$ if the union of the bags of $\XH'$ is included in the union of the bags of $\XH$. Note that it is not required that each bag of $\XH'$ is a subset of a bag of $\XH$. Given $S\subset V(H)$, we say that $\XH$ \emph{minimizes the size of the bags of $S$} (or just \emph{minimizes the bags of $S$}) if there is no model $\XH' = \{X'_u \colon u \in V(H)\}$ of $H$ included in $\XH$ such that $\sum_{v\in S} |X'_v| < \sum_{v\in S} |X_v|$. 
    In particular, we say that $\XH$ is a \emph{minimal model of $H$} if $\XH$ minimizes the bags of $V(H)$. 
    Finally, we say that a bag $X_u$ of $\XH$ is \emph{minimal} if there is no strict subset $X_u'$ of $X_u$ such that replacing $X_u$ by $X_u'$ results in a model of $H$. Note in particular that, if $\XH$ is a minimal model of $H$, then each bag of $\XH$ is minimal.

    A \emph{premodel} is a collection of disjoint subset of vertices of $G$, $\X = \{X_u \colon u \in V(H)\}$, that is not necessarily a model of $H$. 
    In particular, $X_u$ can be the empty set.
    We say that a model $\X^* = \{X^*_u \colon u \in V(H)\}$ of $H$ in $G$ \emph{extends} a premodel $\X = \{X_u \colon u \in V(H)\}$ if, for each $u\in V(H)$, we have $X_u \subseteq X_u^*$.

    \medskip 
    
    Finally, note that given a graph $G$, a graph $H$, and a collection of pairwise disjoint subsets $\X = \{X_u \colon u \in V(H)\}$ of $V(G)$, deciding if $\X$ is a model of $H$ in $G$ can be done in time $\O(|V(G)|^2)$. Indeed, it is enough to check that $G[X_u]$ is connected for each $X_u \in \X$ and that for $u,v \in V(H)$, $uv \in E(H)$ if and only if there is $xy \in E(G)$ with $x\in X_u$ and $y\in X_v$.

    \subsection{Graphs with at most 5 vertices}\label{sec:order5} 
    Before diving into our more general proofs, we provide \cref{fig:order5list} an exhaustive list of graphs with $5$ vertices and recall known and new results regarding the complexity of \IMC{$H$}, for various graphs $H$. 
    \begin{figure}
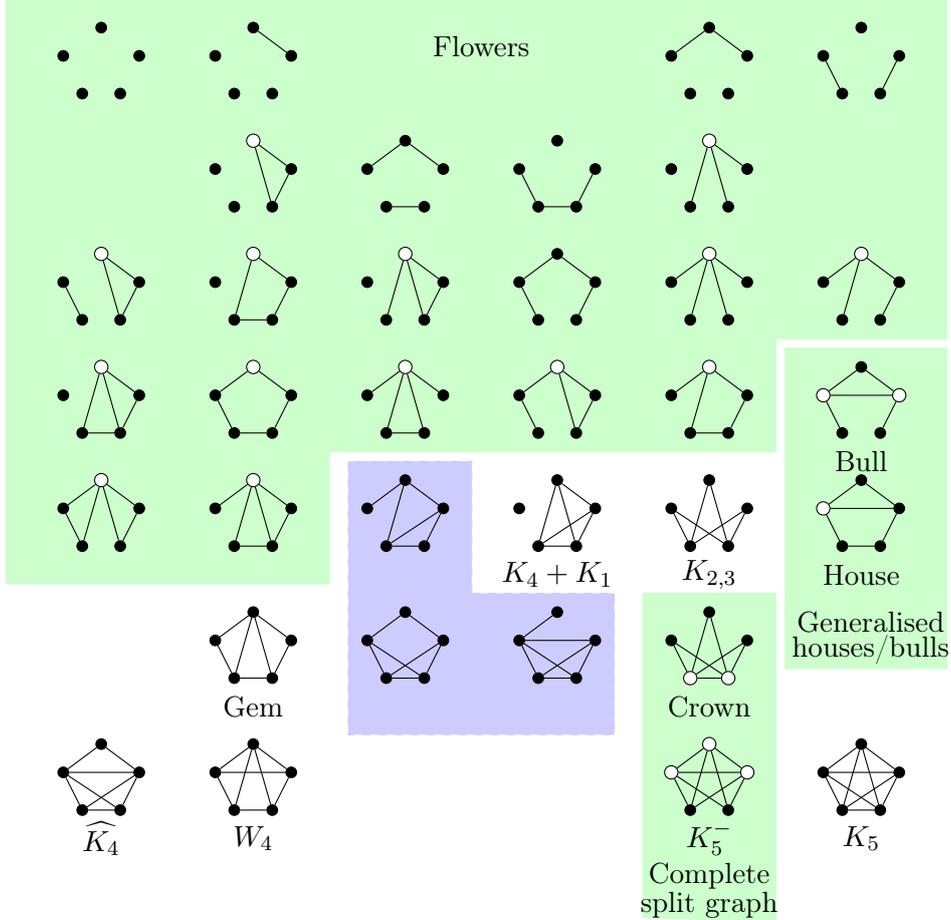

        \centering
        \footnotesize
        \tikzfig{figure/order5}
        \caption{Exhaustive list of graphs with $5$ vertices. The group of graphs with green background belongs to infinite families studied in this paper.
        The ones with blue background are the ones for which the complexity of \IMC{$H$} remains open.
        }
        \label{fig:order5list}
    \end{figure}
    We first note that whenever $H$ is a clique, having $H$ as an induced minor is equivalent to having $H$ as a minor.
    Hence, \IMC{$H$} is polynomial-time solvable whenever $H$ is a clique~\cite{RS95,KKR12,KPS24}. 
    A similar observation can be made for \IMC{$K_t + K_1$}: for each choice of vertex $x$ of $G$ as a bag for $K_1$, all that remains is to try to find a model of $K_t$ in $G\setminus N[x]$.
    Moreover, it is easily noticed that a graph contains a path $P$ as an induced minor if and only if it contains $P$ as an induced subgraph.
    More generally, if $H$ is a disjoint union of paths, then a graph contains $H$ as an induced minor if and only if it contains $H$ as an induced subgraph; see \cref{lem:0BB}. 
    Therefore, in such a case, \IMC{$H$} can be solved in polynomial time.
    For graphs $H$ with at most four vertices, a short discussion in the introduction of~\cite{DDH+24} explains that determining if $H$ is an induced minor can be done efficiently.

    For $K_{2,3}$, a recent result by a superset of the authors of the current paper, based on a characterization in terms of forbidden induced subgraphs and the so-called shortest-path detectors technique, lead to a polynomial-time algorithm for \IMC{$K_{2,3}$}~\cite{DDH+24}. The same (super)set of authors, in an ongoing project, was able to extend these results to show that, among others, \IMC{$W_4$} is polynomial-time solvable. 
    
    \medskip All other results can be deduced from our work. In particular, we show \cref{sec:BB_property} that \IMC{$H$} is polynomial-time solvable whenever $H$ is a flower (\cref{sec:flower}), the bull or the house (\cref{sec:houses}), and the two complete split graphs (\cref{sec:split}). Let us mention once again that our results 
    apply to some generalizations of aforementioned graphs, and thus imply polynomial-time 
    algorithms for graphs with more vertices. Moreover, the result for flowers encompass known 
    results of \authorcite{FKM12} who proved polynomial-time solvability of \IMC{$H$} whenever 
    $H$ is a subdivided star.

    The cases where $H$ is the \emph{Full House} (denoted also $\FH$, which is a $K_4$ plus a vertex adjacent to two vertices of that $K_4$) or the \emph{Gem} (a $P_4$ plus a vertex complete to it) are discussed in \cref{sec:Pt_free}.
    
    \medskip

    \section{Almost trivial models} \label{sec:BB_property}
    
    Note that when considering induced minor models of $H$ in $G$, we can focus on models with as few vertices as possible, and thus on models that minimizes the bags of $V(H)$ (recall that we minimize over all the model, not each bag individually).
    In particular, if this size amounts to $\vert V(H) \vert$, then  the model induces a subgraph in $G$ isomorphic to $H$. 
    
    Based on this observation, we can restate the induced subgraph relation by saying that for every graph $G$ admitting $H$ as an induced subgraph there exists a model of $H$ in $G$ such that every bag is trivial (\ie contains exactly one vertex). In this section, we consider a natural generalization of this observation, where we allow only a subset of vertices of $H$ to have non-trivial bags.
    
    \begin{definition}[$S$-non-trivial property]
        \label{def:bb}
        Let $H$ be a graph and $S$ be a (potentially empty) set of vertices of $H$. We say that $H$ is \emph{$S$-non-trivial}, 
        or \emph{$S$-NT} for short, if for every graph $G$ admitting $H$ as an induced minor, there exists a model $\XH = \{X_u \colon u \in V(H)\}$ of $H$ in $G$ such that for each $v \in V(H) \setminus S$, $\vert X_v \vert = 1$. 
    \end{definition}
    
    Observe in particular that if $H$ is $\emptyset$-non-trivial, then $H$ is an induced minor of some graph $G$ if and only if $H$ is an induced subgraph of $G$.  Recall that, in this case, the problem can be trivially solved in polynomial time. 
    
    \medskip
    
    In the remaining of this section, we prove that \IMC{$H$} is polynomial-time solvable for $S$-NT graphs $H$ with $\vert S \vert \leqslant 1$.  
    We first give some structural properties on the bags of a model for vertices of small degree in $H$. The properties in \cref{lem:minModelSmallDeg} are already known and used  in other papers~\cite{FKM12,DDH+24}, for the sake of completeness, we prove these results. 
    A consequence of the following \lcnamecref{lem:minModelSmallDeg} is that paths are $\emptyset$-NT. 
    
    \begin{lemma}\label{lem:minModelSmallDeg} 
        Let $G$ and $H$ be two graphs such that \IM{H}{G}. 
        Let $\XH$ be a model of $H$ in $G$, such that $X_u$ is minimal for a vertex $u\in V(H)$. Then: 
        \begin{itemize}
            \item if $\deg_H(u)\leq 1$, then $|X_u|=1$; 
            \item if $\deg_H(u)= 2$, with neighbors $v,w$, 
            then there is a unique vertex $x_v$ in $N(X_v)\cap X_u$
             and a unique vertex $x_w$ in $N(X_w)\cap X_u$,
            and $X_u$ induces in $G$ a path whose extremities are $x_v,x_w$.
        \end{itemize}
    \end{lemma}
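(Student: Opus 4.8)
The plan is to use the minimality of $X_u$ directly, through the following reduction. Replacing $X_u$ by a subset $X_u' \subseteq X_u$ while leaving all other bags untouched can only \emph{remove} adjacencies between $X_u$ and the remaining bags, never create new ones; hence such a replacement yields a model of $H$ if and only if (i)~$G[X_u']$ is connected and (ii)~$X_u'$ remains adjacent to $X_t$ for every neighbor $t$ of $u$ in $H$. So minimality of $X_u$ says precisely that no proper connected subset of $X_u$ meets $N(X_t) \cap X_u$ for all such $t$. Both parts then reduce to short connectivity arguments inside $G[X_u]$.

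For $\deg_H(u) \leq 1$: if $u$ is isolated, any single vertex of $X_u$ already forms a connected bag adjacent to no other bag, so condition (ii) is vacuous and minimality forces $|X_u| = 1$. If $u$ has a unique neighbor $v$, then $X_u$ is adjacent to $X_v$, so some $x \in X_u$ lies in $N(X_v)$; the singleton $\{x\}$ is connected and still adjacent to $X_v$, and to no other bag, so by minimality $X_u = \{x\}$.

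For $\deg_H(u) = 2$ with neighbors $v, w$, set $A = N(X_v) \cap X_u$ and $B = N(X_w) \cap X_u$, both nonempty since $X_u$ is adjacent to $X_v$ and to $X_w$. I would pick $x_v \in A$ and $x_w \in B$ and take a shortest $x_v$--$x_w$ path $P$ inside $G[X_u]$. Then $V(P)$ is connected and meets both $A$ and $B$, so by minimality $V(P) = X_u$; since a shortest path is induced, $G[X_u]$ is an induced path with endpoints $x_v, x_w$. For uniqueness, given any $a \in A$, the (unique) subpath of $G[X_u]$ from $a$ to $x_w$ is again connected and meets both $A$ and $B$, so minimality forces it to be all of $X_u$, which is possible only if $a = x_v$; thus $A = \{x_v\}$, and symmetrically $B = \{x_w\}$. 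When $|X_u| = 1$ the path is trivial and $x_v = x_w$.

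The one step that needs care—and where I expect the only real subtlety—is the reduction in the first paragraph: one must verify that passing to a subset $X_u'$ preserves the \emph{if and only if} adjacency condition defining a model. The ``only if'' direction (no spurious edges to other bags appear) is immediate because $X_u' \subseteq X_u$, and the ``if'' direction is guaranteed by explicitly retaining a vertex of $N(X_t) \cap X_u$ for each neighbor $t$ of $u$; once this is in place, the shortest-path choice does all the remaining work and automatically yields an induced path, so no further case analysis on the internal structure of $G[X_u]$ is required.
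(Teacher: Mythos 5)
Your proof is correct and takes essentially the same route as the paper: exploit minimality of $X_u$ via the observation that shrinking a bag never creates spurious adjacencies, pick a single (adjacent) vertex in the degree-$\leq 1$ case, and use a shortest path inside $G[X_u]$ in the degree-$2$ case. The only cosmetic difference is that the paper chooses the shortest path between the sets $N(X_v)\cap X_u$ and $N(X_w)\cap X_u$, so uniqueness of $x_v$ and $x_w$ falls out of shortness directly, whereas you fix arbitrary endpoints and recover uniqueness by a second application of minimality to subpaths (correctly handling the $|X_u|=1$ degenerate case)---both arguments are sound.
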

    \begin{proof}
        We define $H,G,\XH,u,X_u$ as in the statement of the lemma. 
        Suppose first that $\deg_H(u)\leq 1$, and that $|X_u|>1$. 
        Let $x_u$ be an arbitrary vertex of $X_u$ if $\deg_H(u)=0$, otherwise $x_u$ is an arbitrary vertex of $X_u$ adjacent to $X_v$ where $v$ is the unique neighbor of $u$. 
        Then we can replace $X_u$ by $\{x_u\}\subset X_u$ in $\XH$, which contradicts the minimality of $X_u$.

        Suppose now that $u$ has degree $2$ with neighbors $v,w$. Let $P$ be a shortest path in $X_u$ from a neighbor of $X_v$ to a neighbor of $X_w$: such a path exists since $X_u$ is connected and adjacent to $X_v$ and $X_w$. We denote the extremities of $P$ respectively $x_v,x_w$. Then there is no vertex in $V(P)\setminus\{x_v\}$ adjacent to $X_v$ since that would imply the existence of a path with the same property as $P$ that is shorter than $P$, and similarly, there is no vertex in $V(P)\setminus\{x_w\}$ adjacent to $X_w$. Moreover, since $P$ is a shortest path in $X_u$ it is in particular an induced path in $G$.
        Thus $X_u=V(P)$,  otherwise replacing $X_u$ by $V(P) \subset X_u$ in 
        $\XH$ would yield a model of $H$ contradicting the minimality of $X_u$.
    \end{proof}

    The following result guarantees that, if a graph admits some graph $H$ as an induced minor, then there exists a minimal model $\XH$ of $H$ such that  
    for every connected component of $H$ that is not a cycle, every vertex of $H$ of degree at most $2$ has a trivial bag in $\XH$. 
    Moreover, if a connected component of $H$ is a cycle, then at most one bag of this cycle is non-trivial in $\XH$.

    \begin{lemma}\label{lem:minModelPath}
        Let $H$ be a graph and $P$ be a path in $H$ such that the internal vertices of $P$ have degree $2$ (potentially the extremities of $P$ can be adjacent). 
        Let $G$ be a graph such that \IM{H}{G}, and $\XH$ a model of $H$ in $G$. 
        Then there is a model of $H$ in $G$ included in $\XH$ 
        such that the internal vertices of $P$ have trivial bags. Moreover, only the bag of one of the extremities of $P$ is bigger than it was in $\XH$, and the bags of $H\setminus V(P)$ are the same as in $\XH$.
    \end{lemma}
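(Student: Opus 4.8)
The plan is to take a model $\XH$ of $H$ in $G$ and repeatedly "straighten out" the path $P$ by applying the degree-$2$ analysis from \cref{lem:minModelSmallDeg} to its internal vertices, pushing any excess structure toward a single extremity. Let me write out the internal vertices of $P$ as $p_1 p_2 \dots p_k$ with the two extremities being $p_0$ and $p_{k+1}$ (possibly adjacent). Since each $p_i$ ($1 \le i \le k$) has degree exactly $2$ in $H$, with its two neighbors being its path-predecessor and path-successor, I want each bag $X_{p_i}$ to become a single vertex, while only $X_{p_0}$ (say) is allowed to grow and the bags outside $V(P)$ stay untouched.

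The key steps, in order, are as follows. **First**, I would make each internal bag minimal in turn. By \cref{lem:minModelSmallDeg} applied to a degree-$2$ vertex $p_i$, once $X_{p_i}$ is minimal it induces an induced path $Q_i$ in $G$ whose two extremities $x^-_i, x^+_i$ are the unique vertices of $X_{p_i}$ adjacent to the predecessor bag and the successor bag respectively. **Second**, the crucial idea is to absorb all but one vertex of each such path $Q_i$ into a neighboring bag rather than simply shrinking the bag (which would disconnect the model). Concretely, if $|X_{p_i}| > 1$, I keep only the single vertex $x^+_i$ (the one adjacent to the successor side) in the bag $X_{p_i}$, and I move the remaining vertices of $Q_i$, namely the subpath from $x^-_i$ up to but not including $x^+_i$, into the predecessor bag $X_{p_{i-1}}$. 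This preserves connectivity of both bags (the moved vertices form a path attached to $x^-_i$, which was adjacent to $X_{p_{i-1}}$), preserves all adjacencies between consecutive path-bags, and does not create any new adjacency to bags outside $V(P)$ because $X_{p_i}$ was a bag of a degree-$2$ vertex. **Third**, I process the internal vertices in order $i = k, k-1, \dots, 1$, each time emptying $X_{p_i}$ down to one vertex and dumping the surplus into $X_{p_{i-1}}$; after the last step the entire surplus has cascaded into $X_{p_0}$, the chosen extremity. **Finally**, I verify that the resulting collection is a model of $H$ included in $\XH$ (we only ever moved vertices among bags of $V(P)$, never added new vertices of $G$), that every internal bag is now trivial, that the bags of $H \setminus V(P)$ are literally unchanged, and that only $X_{p_0}$ may have grown.

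The main obstacle I anticipate is the bookkeeping around \emph{adjacencies}, not connectivity. I must check that after moving a chunk of $Q_i$ into $X_{p_{i-1}}$, the bag $X_{p_{i-1}}$ is still adjacent to $X_{p_{i-2}}$ and to $X_{p_i}$ but to nothing else it should not touch, and symmetrically that $X_{p_i} = \{x^+_i\}$ remains adjacent to $X_{p_{i+1}}$. The fact that $x^-_i$ is the \emph{unique} vertex of $X_{p_i}$ adjacent to the predecessor side (from \cref{lem:minModelSmallDeg}) is exactly what guarantees that the moved vertices, and in particular the new attachment point, behave correctly; and the degree-$2$ hypothesis on internal vertices is what rules out spurious adjacencies to the rest of the model. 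A secondary subtlety is the boundary case where the extremities $p_0$ and $p_{k+1}$ of $P$ are adjacent in $H$: here $P$ plus that edge forms a cycle, but the argument is unaffected since we never rely on $p_0 p_{k+1} \notin E(H)$ when manipulating internal bags. I would also note that making a bag minimal and then re-growing a neighbor could in principle re-enlarge an already-processed bag, which is why processing in decreasing order of $i$ and only ever pushing \emph{backward} (toward $p_0$) is important: once $X_{p_i}$ is set to a single vertex it is never touched again.
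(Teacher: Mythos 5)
Your proposal is correct and takes essentially the same route as the paper's proof: minimize the internal bags, invoke \cref{lem:minModelSmallDeg} so that the internal bags induce paths with unique attachment vertices (whose union is a single induced path between neighbors of the two extremity bags), and then re-partition this path so that each internal vertex receives a trivial bag while all surplus vertices are absorbed by one extremity --- the paper performs this reassignment in one shot along the concatenated path $Q$, whereas you cascade the surplus bag-by-bag toward $p_0$, which amounts to the same argument. One cosmetic point: after the first push, the bag currently being processed is no longer minimal (it is the original path plus accumulated surplus), so strictly speaking the vertex you keep is the unique successor-attachment endpoint of the current \emph{merged} induced path rather than the original $x^+_i$ guaranteed by \cref{lem:minModelSmallDeg}; this is immediate from the induced-path structure, but your notation conflates the two.
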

    \begin{proof}
        We define $H,G,\XH$ as in the lemma. 
        Let $P$ be a path in $H$ of extremities $a,b$ such that the internal vertices of $P$ have degree two. 
        We can assume that the bags of $V(P)$ are minimal up to removing unnecessary vertices in those bags, which can only make bags included in the original ones.
        By \cref{lem:minModelSmallDeg}, for every internal vertex $u$ of $P$, $X_u$ induces a path on at least 1 vertex between the two bags of the neighbors of $u$.
        Hence, the union of the bags of the internal vertices of $P$ induces a path $Q$ of extremities $y_a,y_b$, the unique neighbors of respectively $X_a$ and $X_b$ in $\bigcup_{v\in V(P)\setminus \{a,b\}} X_v=V(Q)$, and $Q$ contains at least $|V(P)|-2$ vertices.
    
        Then we can define a new model of $H$ by replacing in $\XH$ the bags of the internal vertices of $P$ respectively by the $|V(P)|-2$ first vertices of $Q$ (starting from $x_a$), and adding the remaining vertices of $Q$ to $X_b$. 
        Then this new model is included in $\XH$, the bags of $(V(H)\setminus V(P)) \cup \{a\}$ are the same as in $\XH$, 
        and all the internal vertices of $P$ have trivial bags.
    \end{proof}

    From the above lemmas, we can deduce the following result.
    
    \begin{lemma}\label{lem:0BB}
        $H$ is $\emptyset$-NT if and only if $H$ is a disjoint union of paths.
    \end{lemma}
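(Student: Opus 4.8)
The statement is an equivalence, so the plan is to prove the two implications separately, handling the nontrivial direction by its contrapositive. The implication ``disjoint union of paths $\Rightarrow$ $\emptyset$-NT'' should follow directly from the two preceding lemmas. Given any $G$ with \IM{H}{G}, I would start from a minimal model $\XH$ and process each path component $P = p_1\cdots p_\ell$ of $H$ independently. Its internal vertices have degree $2$, so \cref{lem:minModelPath} produces a model included in $\XH$ in which every internal vertex of $P$ has a trivial bag, changing only the bag of one extremity of $P$ and leaving all bags outside $P$ untouched. The two extremities have degree at most $1$ in $H$, so shrinking each of their bags to a minimal one and invoking \cref{lem:minModelSmallDeg} makes them trivial as well, without disturbing the already-trivial internal bags. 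Carrying this out for every component yields a model all of whose bags are singletons, i.e.\ an induced-subgraph copy of $H$ in $G$; as $G$ was arbitrary, $H$ is $\emptyset$-NT.

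For the converse I would argue the contrapositive: if $H$ is not a disjoint union of paths, I must exhibit a $G$ with \IM{H}{G} that nonetheless contains no induced copy of $H$. A graph fails to be a disjoint union of paths exactly when it contains a cycle or has a vertex of degree at least $3$; since an acyclic graph is a forest, it suffices to treat two cases. In the first case, $H$ contains a cycle, and I would take $G$ to be the $t$-subdivision of $H$ (each edge replaced by a path with $t+1$ edges) for $t$ large, say $t=|V(H)|$. One realizes $H$ as an induced minor by keeping a singleton bag for each original vertex and absorbing the internal vertices of each subdivided edge into the bag of one endpoint; as subdivision vertices of distinct edges are pairwise non-adjacent, this model induces exactly $H$, so \IM{H}{G}. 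On the other hand every cycle of $G$ is a subdivided cycle of $H$, whence $\mathrm{girth}(G) = (t+1)\cdot\mathrm{girth}(H) > |V(H)|$. An induced copy of $H$ would contain an induced cycle of length $\mathrm{girth}(H)\le |V(H)| < \mathrm{girth}(G)$, which is impossible, so $H$ is not an induced subgraph of $G$.

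In the second case $H$ is a forest with a vertex of degree $\ge 3$; since no two neighbors of such a vertex are adjacent (an edge among them would close a cycle), three of its neighbors together with it induce a claw, so $H$ contains an induced $K_{1,3}$. I would then build a \emph{claw-free} $G$ still satisfying \IM{H}{G} by replacing every vertex $u$ with $\deg_H(u)\ge 3$ by a clique on $\deg_H(u)$ vertices, distributing the edges at $u$ so that each clique-vertex carries exactly one edge leaving its clique. Contracting every such clique back to a point recovers $H$, giving \IM{H}{G}. Each clique-vertex then has a neighborhood consisting of a clique plus a single vertex anticomplete to it, hence no independent triple, while every untouched vertex keeps degree $\le 2$; thus $G$ is claw-free and cannot contain the induced claw present in $H$, let alone $H$ itself. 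The main obstacle is exactly this converse direction: no single construction covers both cases, and in each one has to check carefully both that the gadget preserves the induced-minor relation \emph{exactly} — that contracting introduces no spurious adjacencies — and that the chosen invariant (large girth in the cyclic case, claw-freeness in the forest case) really forbids $H$ as an induced subgraph.
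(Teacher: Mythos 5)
Your proof is correct, and while the forward implication coincides with the paper's (apply \cref{lem:minModelPath} to make the internal bags of each path component trivial, then \cref{lem:minModelSmallDeg} to trivialize the degree-$\leq 1$ extremities), your converse takes a genuinely different route in both cases. For the cyclic case, the paper subdivides a \emph{single} edge of a cycle of $H$ once, obtaining a graph on $|V(H)|+1$ vertices, and merely ``observes'' that $H$ is not an induced subgraph; verifying this requires ruling out $G\setminus x$ being isomorphic to $H$ for every vertex $x$, which is true but not immediate. Your $t$-fold subdivision trades a larger gadget for a one-line, fully robust invariant: $\mathrm{girth}(G)=(t+1)\cdot\mathrm{girth}(H)>|V(H)|$, while any copy of $H$ would force a cycle of length at most $|V(H)|$. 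For the high-degree case, the paper replaces only the \emph{one} chosen vertex $u$ of degree at least $3$ by a clique with a pendant matching and justifies the conclusion by claiming $G$ has no induced acyclic subgraph on $|V(H)|$ vertices; that claim is in fact false as stated (for $H=K_{1,3}$ the gadget is the net, which contains an induced $P_4$ on $|V(H)|=4$ vertices), and since $G$ may still contain claws at the \emph{other} branch vertices of $H$, repairing it requires an argument such as counting potential claw centers. Your variant --- replacing \emph{every} vertex of degree at least $3$ simultaneously, with each clique vertex carrying exactly one outgoing edge --- makes $G$ globally claw-free (any three neighbors of a clique vertex include two adjacent ones, and untouched vertices have degree at most $2$), so the obstruction is immediate and uniform: $H$, a forest with a branch vertex, contains an induced $K_{1,3}$, while no induced subgraph of $G$ does. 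In short, your constructions are heavier than the paper's but your invariants (girth, claw-freeness) yield cleaner verifications, and in the forest case your argument is actually more watertight than the paper's own justification.
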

    \begin{proof}
        Let $H$ be a $\emptyset$-NT graph. Recall that, in that case, $H$ is an induced minor of some graph $G$ if and only if $H$ is an induced subgraph of $G$.
        Suppose that $H$ has a cycle $C$. Let $G$ be the graph constructed from $H$ by subdividing an edge of $C$. Observe that $H$ is an induced minor of $G$ but not an induced subgraph of $G$, a contradiction.
        Suppose now that $H$ has a vertex $u$ with $\deg_H(u)\geq 3$. Let $G$ be the graph constructed from $H$ by replacing $u$ by a clique $K_u$ of size $\deg_H(u)$, and adding a matching between the neighbors of $u$ and the vertices of $K_u$. 
        Observe that $H$ is an induced minor of $G$, but there is no induced acyclic subgraph of $G$ of size $|V(H)|$, hence $H$ is not an induced subgraph of $G$, a contradiction.  
        Therefore, $H$ is acyclic and has maximum degree 2, and thus $H$ is a union of paths.
    
        Conversely, let $H$ be a disjoint union of paths, and $G$ be a graph that admits $H$ as an induced minor. 
        We get from \cref{lem:minModelPath} that there is a model $\XH$ such that the internal vertices of the paths of $H$ have trivial bags, and, up to removing useless vertices in each bag, we can suppose that $\XH$ has minimal bags.
        From \cref{lem:minModelSmallDeg}, we obtain that the extremities of those paths also have trivial bags. Thus, $H$ is $\emptyset$-NT. 
    \end{proof}
    
    \citeauthor{FKM12} observed that subdivided stars of center $u$ are $\{u\}$-NT, and gave a polynomial time algorithm for detecting them~\cite[Proposition 2]{FKM12}. We generalize their result for every $\{u\}$-NT graph $H$.
    
    \begin{theorem}\label{th:1BBpoly}
        If $H$ is $S$-NT with $|S|\leq 1$, then \IMC{$H$} is polynomial-time solvable.
    \end{theorem}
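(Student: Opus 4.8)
The plan is to split according to the size of $S$. If $S = \emptyset$, then by \cref{lem:0BB} the graph $H$ is a disjoint union of paths, so $H \im G$ holds if and only if $H$ is an induced subgraph of $G$; this is decided in time $\O(|V(G)|^{|V(H)|})$ by brute force over all $|V(H)|$-subsets, which is polynomial since $H$ is fixed. The substantive case is $S = \{s\}$ for a single vertex $s$, where the $S$-NT property guarantees that whenever $H \im G$ there is a model of $H$ in which every bag except possibly $X_s$ is a singleton. I would search only among such ``almost trivial'' models.

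Concretely, I would enumerate all candidate placements of the trivial bags: for each injective map $a \colon V(H) \setminus \{s\} \to V(G)$, of which there are at most $|V(G)|^{|V(H)|-1}$ (a polynomial number), set $A = \{a(v) : v \neq s\}$ and check that $a(u)a(v) \in E(G)$ exactly when $uv \in E(H)$ for all $u, v \in V(H) \setminus \{s\}$. Assignments failing this adjacency test are discarded. For the surviving ones, it remains to decide whether a connected bag $X_s$ can be added so that $\{a(v)\} \cup \{X_s\}$ is a model of $H$.

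The key observation, and the step I expect to be the crux of the argument, is that this completion test reduces to a trivial connectivity check rather than a hard Steiner-tree-type problem, precisely because we need only the \emph{existence} of $X_s$ and may take it as large as connectivity permits. Let $\bar{N} = \{a(v) : v \notin N_H[s]\}$ be the singletons of the non-neighbours of $s$, and put $W = V(G) \setminus (A \cup N_G(\bar{N}))$; every $X_s \subseteq W$ is automatically disjoint from $A$ and non-adjacent to each non-neighbour singleton. For each neighbour $v \in N_H(s)$ let $T_v = N_G(a(v)) \cap W$. Then a valid $X_s$ exists for this assignment if and only if some connected component $C$ of $G[W]$ meets $T_v$ for every $v \in N_H(s)$: in that case $C$ itself is a valid bag, and conversely any valid $X_s$, being connected and contained in $W$, lies inside a single component of $G[W]$ that must meet every $T_v$. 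Each such check runs in polynomial time, so the whole procedure is polynomial, and it answers ``yes'' exactly when some assignment passes both tests.

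Correctness in one direction is immediate, since a passing assignment exhibits an explicit model. For the converse I would invoke the $S$-NT hypothesis: if $H \im G$, there is a model with all bags but $X_s$ trivial; its singletons define an assignment $a$ passing the adjacency test, and $X_s$ is connected, disjoint from $A$, and non-adjacent to the non-neighbour singletons, hence $X_s \subseteq W$ and it sits in one component of $G[W]$ meeting every $T_v$, so this assignment is detected. The only place the argument could break is the reduction of the completion step to component intersection; everything there hinges on seeking existence rather than a minimal bag, which is exactly what licenses taking whole components of $G[W]$.
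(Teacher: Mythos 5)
Your proposal is correct and matches the paper's proof essentially step for step: the same case split via \cref{lem:0BB}, the same enumeration of $\O(|V(G)|^{|V(H)|-1})$ premodels with all bags trivial except the one for $s$, and the same completion test — your $W = V(G)\setminus(A \cup N_G(\bar{N}))$ is exactly the paper's $G\setminus(Y\cup N[Z])$, and your component-meets-every-$T_v$ criterion is the paper's check that some connected component contains a neighbor of each trivial bag of $N_H(s)$. The crux you identify (existence lets you take a whole component as the bag, avoiding any Steiner-type minimization) is precisely the observation the paper relies on.
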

    \begin{proof} If $S = \emptyset$, then by \cref{lem:0BB} it is equivalent to testing if $H$ is a disjoint union of paths, which can clearly be done in polynomial time. Suppose that $H$ is $\{u\}$-NT for one of its vertex $u$. 
        Let $G$ be the input graph and assume that \IM{H}{G}. Since $H$ is $\{u\}$-NT, then there is a model $\XH$ where only the bag of $u$ is non-trivial. Observe that $G[X_u]$ induces a connected graph. Moreover, for each vertex $v$ not adjacent to $u$ in $H$, $X_u$ is not adjacent to $X_v$, so $X_u\cap N(X_v)= \emptyset$, and similarly, $X_u$ has to contain at least one vertex of $N(X_w)$ for each $w$ adjacent to $u$. 
    
        This gives us the following polynomial strategy to detect if $H$ is an induced minor of $G$, and output a model in the positive case:
        we enumerate all the premodels of $H$ where the bags contain exactly one vertex of $G$, except for the bag of $u$ which is empty. There are $\O(n^{|V(H)|-1})$ possibilities.
        Given such a premodel $\X=\{X_v \colon w \in V(H)\setminus \{u\}\}$, we first check if for each $v,w$ in $H\setminus \{u\}$, the vertices in the trivial bags $X_v$ and $X_w$ are adjacent if and only if $v,w$ are adjacent. If this condition is not satisfied, we can reject the premodel.
        Otherwise, let $Y = \bigcup_{v\in N_H(u)} X_v$ and $Z = \bigcup_{v\notin N_H(u)} X_v$. We enumerate the connected components $C_1,\dots ,C_r$ of  $G \setminus(Y\cup N[Z])$, which can be done by Breadth-First Search in time $\O(|V(G)|+|E(G)|)$. If there is one connected component $C_i$ containing a vertex of $N_G(v)$ for each $v \in Y$, then we have found a model of $H$ in $G$ with $X_u = C_i$.
        If for every possible premodel we did not find a suitable connected component, then we can conclude that $H$ is not an induced minor of $G$.
        The algorithm described here takes polynomial time, since $H$ is fixed.  
    \end{proof}

    \subsection{Flowers}\label{sec:flower}
    
    We say that a graph $H$ is a \emph{flower} if there is a vertex $u\in V(H)$ such that $H\setminus \{u\}$ is a disjoint union of paths and for each path $P$, either $|V(P)|=3$ and $P$ is complete to $u$ (\emph{sepal}), or $P$ is connected only by $0$, $1$ (\emph{stamens}) or $2$ (\emph{petal}) of its extremities to $u$. 
    The vertex $u$ is called the \emph{center} of $H$. We refer the reader to \cref{fig:order5list} for an exhaustive list of flowers with $5$ vertices. 
    
    \medskip
    
    We show that flowers are $\{u\}$-NT and hence can be detected in polynomial time with \cref{th:1BBpoly}. 
    
    \begin{lemma}\label{lem:flower1BB}
        If $H$ is a flower of center $u$, then $H$ is $\{u\}$-NT.
    \end{lemma}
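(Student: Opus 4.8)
The plan is to start from an arbitrary model $\XH$ of $H$ in $G$ and, processing the arms of the flower one at a time, transform it into a model in which every bag except possibly $X_u$ is trivial. The observation that makes every surgery safe is the following: if $P$ is a path-component of $H \setminus \{u\}$, then in $H$ the vertices of $P$ are adjacent only to $u$ and to one another, so the union of the bags indexed by $V(P)$ is anticomplete to every bag of the other components. Consequently, as long as I respect the adjacencies \emph{internal} to the arm and keep $X_u$ connected, I may freely move vertices of an arm's bags into $X_u$ or discard them without disturbing the rest of the model (adding vertices to $X_u$ cannot create a forbidden adjacency, since those vertices are anticomplete to the other components, and $X_u$'s required adjacencies are preserved because I only enlarge it). I would first make $\XH$ minimal, so that by \cref{lem:minModelSmallDeg} every vertex of degree $1$ already has a trivial bag and every vertex of degree $2$ has a bag inducing a path.

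For a stamen or an isolated path the argument is immediate. Applying \cref{lem:minModelPath} to the path formed by the arm (together with $u$ in the stamen case) trivializes all of its internal vertices and leaves only one non-trivial extremity; in the stamen case I choose $u$ as the extremity that absorbs the surplus, while in the isolated case the surplus vertices of the non-trivial extremity may simply be deleted. This reduces the problem to petals (cycles through $u$) and sepals (diamonds at $u$).

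For a petal $u p_1 \cdots p_\ell u$ every $p_i$ has degree $2$, so $Q = X_{p_1} \cup \cdots \cup X_{p_\ell}$ induces a path whose two ends are adjacent to $X_u$, while the interior bags $X_{p_2}, \ldots, X_{p_{\ell-1}}$ are anticomplete to $X_u$. I would let $q_\alpha$ (resp. $q_\beta$) be the vertex of $X_{p_1}$ (resp. $X_{p_\ell}$) adjacent to $X_u$ that is closest along $Q$ to the other end; the subpath $R$ from $q_\alpha$ to $q_\beta$ is then anticomplete to $X_u$ except at its endpoints, and since it contains all interior bags it has at least $\ell$ vertices. Repeatedly absorbing an endpoint of $R$ into $X_u$ shortens $R$ by one while preserving this structure (the new endpoint becomes adjacent to $X_u$ through the absorbed vertex, and the new interior vertices stay anticomplete to $X_u$), until exactly $\ell$ vertices remain; assigning one vertex to each $p_i$ then yields the desired model.

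The sepal is the main obstacle, precisely because its middle vertex $b$ has degree $3$ and \cref{lem:minModelSmallDeg} says nothing about its bag. Viewing the sepal as $u$ complete to an induced path $abc$, I would first absorb into $X_u$ all of $X_a$ (resp. $X_c$) except the unique vertex adjacent to $X_b$ given by \cref{lem:minModelSmallDeg}; this trivializes $X_a$ and $X_c$ to single vertices $\alpha$ and $\gamma$ that are adjacent to $X_u$ and non-adjacent to each other. The remaining task is to produce a single representative of $b$ that sees $\alpha$, $\gamma$, and $X_u$ simultaneously, which in general cannot be done while keeping $\alpha$ and $\gamma$ fixed. To handle this I would take a shortest induced $\alpha$–$\gamma$ path $\pi = \alpha \beta_1 \cdots \beta_k \gamma$ through $X_b$, together with a shortest path $\sigma$ inside $X_b$ (possibly a single vertex) from a vertex adjacent to $X_u$ to the interior of $\pi$, reaching it at some $\beta_j$. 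Absorbing the two tails of $\pi$ and the interior of $\sigma$ into $X_u$ keeps $X_u$ connected and makes all three of $\beta_{j-1}, \beta_j, \beta_{j+1}$ adjacent to $X_u$; as consecutive vertices of the induced path $\pi$ they form an induced $P_3$ with $\beta_{j-1} \not\sim \beta_{j+1}$, so setting $X_a = \{\beta_{j-1}\}$, $X_b = \{\beta_j\}$, $X_c = \{\beta_{j+1}\}$ gives a valid sepal with trivial bags. The delicate point I expect to require the most care is exactly this construction: ensuring that the representative of $b$ can be forced adjacent to $X_u$ while the representatives of $a$ and $c$ remain non-adjacent, which is why the rerouting through $\sigma$ (rather than naively retaining $\alpha$ and $\gamma$ as the representatives of $a$ and $c$) is necessary.
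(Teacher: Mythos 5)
Your proposal is correct and follows essentially the same approach as the paper: the sepal surgery via a shortest induced $\alpha$--$\gamma$ path $\pi$ through $X_b$ together with a connector $\sigma$ to $X_u$, taking the junction $\beta_j$ and its $\pi$-neighbors as the new trivial representatives of $b$, $a$, $c$ and absorbing the tails into $X_u$, is exactly the paper's construction (its $P_{ac}$, $P_u$, $x_b$, $x_a$, $x_c$), and the remaining arms are handled by the same two lemmas (\cref{lem:minModelSmallDeg,lem:minModelPath}). The only difference is presentational: the paper phrases the replacement as a contradiction to a model minimizing the bags of $H\setminus\{u\}$, whereas you perform the surgery directly arm by arm.
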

    \begin{proof}
        Suppose $H$ is a flower of center $u$, let $G$ be a graph such that \IM{H}{G} and let
        $\XH$ be a model of $H$ in $G$ that minimizes the size of the bags of $H\setminus \{u\}$ (\ie such that there is no model $\XH'$ included in $\XH$ such that the sum of the sizes of the bags of $H\setminus \{u\}$ is strictly smaller in $\XH'$ than in $\XH$). In particular, every bag of $H\setminus \{u\}$ is minimal. 
        Let us show that $X_u$ is the only bag that is non-trivial.
    
        Suppose first that $H$ contains a sepal, \ie a path $P=abc$ that is complete to $u$, and suppose that the bag of a vertex of $P$ is not trivial.
        Let $y_a,y_u,y_c$ be three vertices adjacent to $X_b$, respectively belonging to $X_a,X_u,X_c$.
        Let $P_{ac}$ be a shortest path in $G[X_b\cup\{y_a,y_c\}]$ from $y_a$ to $y_c$. Note that $y_a$ and  $y_c$ are not adjacent and thus $P_{ac}$ admits at least one internal vertex in $X_b$.
        Let $P_u$ be a shortest path in $G[X_b\cup \{y_u\}]$ from $y_u$ to some internal vertex of $P_{ac}$.
        Then the vertex $x_b$ at the intersection of $P_{ac}$ and $P_u$ has degree at least $3$ in $G[V(P_{ac})\cup V(P_u)]$ and belongs to $X_b$. 
        Let $x_u$ be the neighbor of $x_b$ on $P_u y_u$, and similarly let $x_a$ be the neighbor of $x_b$ on $x_b P_{ac} y_a$ and $x_c$ be the neighbor of $x_b$ on $x_b P_{ac} y_c$. 
        Note that it is possible to have $x_a=y_a$, $x_b=y_b$ or $x_u=y_u$, but by construction, $\{x_a,x_b,x_c,x_u\}$ are all distinct.
        Similarly, our construction allows $x_u$ being adjacent to $x_a$ or $x_c$, but the fact that $P_{ac}$ is a shortest path in $G[X_b\cup\{y_a,y_c\}]$ prevents $x_a$ and $x_c$ from being adjacent. 
        Therefore, if we replace in $\XH$ the bags $X_a,X_b,X_c$ and $X_u$ by respectively $\{x_a\},\{x_b\},\{x_c\}$ and $X_u'=X_u\cup X_a\cup X_b\cup X_c \setminus \{x_a,x_b,x_c\}$ (in particular $X_u'$ contains $x_u$ that is adjacent to $x_b$), we obtain a new model of $H$ included in $\XH$ %
        in which the bags of $a,b$ and $c$ are trivial, and this contradicts the choice of $\XH$. 
    
        We showed that all the vertices of $H$ that are in a sepal (except $u$) have trivial bags.
        Observe that every vertex $v\neq u$ that is in a stamen or petal either has degree $1$ or is an internal vertex of a path with degree $2$.  
        Thus by \cref{lem:minModelSmallDeg,lem:minModelPath}, every such vertex has a trivial bag in $\XH$. 
        Hence, every vertex that is not $u$ has a trivial bag in $\XH$, therefore $H$ is $\{u\}$-NT.
    \end{proof}
    
    Combining \cref{lem:flower1BB} and \cref{th:1BBpoly}, we obtain the following \lcnamecref{thm:flowerPoly}. 
    
    \ThmFlowers*

    \subsection{Generalized Houses and Bulls}
    \label{sec:houses}
    
    We say that a graph is a \emph{\GH{}} if it consists of a triangle $a,u,v$, vertices $b$ and $c$ adjacent to $u$ and $v$ respectively, and a path  $R = b_1b_2\dots b_r$  from $b=b_1$ to $c=b_r$, with $r > 1$ (see \cref{fig:generalized_house}).
    A \GB{} is defined similarly where $R$ has a missing edge. More formally, in a \GB{} $u,v,a,b,c$ and their adjacencies are defined the same, but $R$ is replaced by two paths $R_b= b_1\dots b_s$ and $R_c=b_{s+1}\dots b_r$, with $1\leq s < r $, with still  $b=b_1$ and $c=b_r$. 
    
    We show that if $H$ is a \GH{} or a \GB{}, then \IMC{$H$} can be solved in polynomial time.
    The main idea here is that these graphs are $\{u,v\}$-NT, with one of the bags having a specific structure (\cref{lem:2BBhouses}).
    It also allows us to prove that \GH{s} are $\{u\}$-NT (\cref{lem:1BBhouses}). However, it is not the case for \GB{s}, and we design a polynomial-time algorithm that works for both families, outputting a model where both the bags of $u$ and $v$ might be non-trivial.
    
    \begin{figure}[ht]
        \centering
        \begin{subfigure}[t]{.48\textwidth}
            \centering\includegraphics[scale=2]{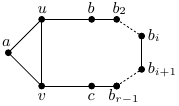}
            \caption{A generalized house}
            \label{fig:genhouse}
        \end{subfigure}
        \begin{subfigure}[t]{.48\textwidth}
            \centering\includegraphics[scale=2]{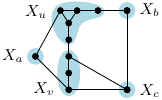}
            \caption{A model of a house}
            \label{fig:modelhouse}
        \end{subfigure}
        \caption{Note that removing one edge $b_ib_{i+1}$ for some $1\leq i \leq r-1$ results in a generalized bull. 
        }
        \label{fig:generalized_house}
    \end{figure}

    The idea of the proof of \cref{lem:2BBhouses} is similar to that of \cref{lem:flower1BB}: we start from a model that minimizes the bag of $v$ and deduce the structure of the bags.

    \begin{lemma}\label{lem:2BBhouses}
        If $H$ is a \GH{} or a \GB, then $H$ is $\{u,v\}$-NT. 
        Moreover, if a graph $G$ admits $H$ as induced minor, then there exists a model $\XH$ such that
         $G[X_v]$ is a path from a vertex adjacent to both $X_a$ and $X_c$ to a vertex adjacent to both $X_c$ and $X_u$. Furthermore, these vertices are the unique vertices in $X_v$ adjacent to $X_a$ and $X_u$ respectively. 
    \end{lemma}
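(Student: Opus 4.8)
The plan is to establish the two assertions by two (possibly different) models, mirroring the minimization idea behind \cref{lem:flower1BB}. I first record the only local facts I will use, which hold identically for a \GH{} and a \GB{}: the vertex $v$ has exactly the three neighbours $a,u,c$; the set $\{a,u,v\}$ induces a triangle; and $c$ is adjacent to neither $a$ nor $u$ (while $c\sim v$). Because of this, the argument around $v$ is uniform for the two families, and only the handling of the path(s) will differ.

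For the structural (``moreover'') statement I start from a model $\XH$ that \emph{minimizes $|X_v|$}. Given the other bags, $X_v$ is then an inclusion-minimal connected set meeting $N(X_a)$, $N(X_u)$ and $N(X_c)$, so $G[X_v]$ is a tree whose (at most three) leaves are each adjacent to one of $X_a,X_u,X_c$. The core is a list of local moves, each of which would strictly decrease $|X_v|$ and is therefore impossible: (i) a leaf adjacent to $X_u$ can be absorbed into $X_u$ unless it also touches $X_c$ (the only illegal adjacency this could create, since $u\not\sim c$ while $u\sim a$ and $u\sim v$), so every $X_u$-leaf touches $X_c$; (ii) symmetrically a leaf adjacent to $X_a$ can be moved into $X_a$ unless it touches $X_c$ (here $a\not\sim c$, whereas $a\sim u$ and $a\sim v$ keep the remaining new adjacencies legal), so every $X_a$-leaf touches $X_c$; (iii) a leaf adjacent only to $X_c$ can be absorbed into $X_c$ (all resulting adjacencies are legal as $c\sim v$), which is impossible, so no such leaf exists; and (iv) two leaves attached to the same bag $X_a$ (resp. $X_u$) let one delete the redundant leg entirely, since by (ii) (resp. (i)) the surviving leaf already realizes the adjacencies to both $X_a$ and $X_c$ (resp. $X_u$ and $X_c$). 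Hence the tree has at most two leaves, so $X_v$ is a path (a single vertex if the two anchors coincide), its endpoints being an $X_a$-neighbour $p$ and an $X_u$-neighbour $q$, both adjacent to $X_c$. Finally, if an internal vertex were adjacent to $X_a$ (resp. $X_u$), the subpath from it to $q$ (resp. from $p$ to it) would already meet all three neighbourhoods and be shorter, contradicting minimality; thus $p$ and $q$ are the unique vertices of $X_v$ adjacent to $X_a$ and $X_u$. I expect the main obstacle to be exactly this case analysis: checking that each absorption keeps every bag connected, preserves the remaining adjacencies of $X_v$, and creates no forbidden adjacency, so that minimality genuinely rules out every non-path configuration.

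For the $\{u,v\}$-NT statement I construct a (generally different) model in which only $X_u$ and $X_v$ are non-trivial. First I trivialize the path bags: for a \GH{} I apply \cref{lem:minModelPath} to the path $u\,b_1\cdots b_r\,v$, whose internal vertices all have degree $2$, trivializing $b_1,\dots,b_r$ (in particular $b$ and $c$) while possibly enlarging $X_u$ or $X_v$; for a \GB{} I apply the same lemma to each horn $u\,b_1\cdots b_s$ and $v\,b_r\cdots b_{s+1}$, using \cref{lem:minModelSmallDeg} for their degree-$1$ ends. It then remains to trivialize $X_a$. As $a$ has degree $2$ with neighbours $u,v$, \cref{lem:minModelSmallDeg} gives that a minimal $X_a$ induces a path whose extremities $x^a_u,x^a_v$ are its unique neighbours of $X_u$ and $X_v$; I replace $X_a$ by $\{x^a_v\}$ and move $X_a\setminus\{x^a_v\}$ into $X_u$. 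This is a valid model: the moved vertices are adjacent only to $X_a,X_u,X_v$, so no forbidden adjacency appears (as $u$ is adjacent to both $a$ and $v$), $X_u$ stays connected through $x^a_u$, and $\{x^a_v\}$ remains adjacent both to $X_v$ and to the enlarged $X_u$. Only $X_u$ and $X_v$ are then non-trivial, which proves that $H$ is $\{u,v\}$-NT.
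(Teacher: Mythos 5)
Your first half (the $\{u,v\}$-NT claim) is essentially the paper's argument: trivialize $b_1,\dots,b_r$ (resp.\ the horns) and $a$ via \cref{lem:minModelPath,lem:minModelSmallDeg}; that part is fine. The decisive problem is in the second half, and it starts with your opening sentence: you prove the two assertions with \emph{two possibly different models}. The lemma is used downstream (\cref{lem:1BBhouses} and the proof of \cref{th:housePoly}) as supplying a \emph{single} model in which all bags outside $\{u,v\}$ are trivial \emph{and} $G[X_v]$ is a path with the stated attachments; the paper's proof accordingly minimizes $|X_v|$ \emph{within} the class of models whose only non-trivial bags are $X_u,X_v$, and all of its reduction moves stay in that class. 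Your moves do not: (ii) enlarges $X_a$ and (iii) enlarges $X_c$. This is not a cosmetic difference — the availability of ``absorb into $X_c$'' is precisely what lets you avoid the paper's hardest manoeuvre, the shift of all the trivial bags of $c,b_{r-1},\dots$ one step along $R$ (resp.\ $R_c$) with the surplus dumped into $X_u$. So your relaxation buys a lighter argument but delivers only the weaker two-model statement; and one cannot simply re-run your Part 1 afterwards, since shrinking $X_a$ or $X_c$ back to singletons can destroy the adjacencies of the distinguished endpoints of $X_v$.

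Even for the weaker reading, the case analysis you yourself flag as the main obstacle has concrete holes. First, ``$G[X_v]$ is a tree'' does not follow from inclusion-minimality: a triangle whose three vertices separately meet $N(X_a)$, $N(X_u)$, $N(X_c)$ is inclusion-minimal connected meeting all three, and $G[X_v]$ may in general have chords; since the lemma asserts an \emph{induced} path, you need cardinality-minimality plus a chord-shortcutting step, which is missing. Second, moves (i) and (ii) justify impossibility only by inspecting \emph{created} adjacencies, but an absorption can \emph{destroy} a required adjacency of $X_v$: a leaf adjacent to both $X_a$ and $X_u$ but not $X_c$, uniquely realizing both adjacencies, is immune to (i) and to (ii) as you state them (either absorption kills the unique contact with the other bag), so the conclusion ``every $X_u$-leaf touches $X_c$'' is true but not for your stated reason; it needs extra cases (here the other leaf is forced to touch only $X_c$, and iterating (iii) does the job). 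Third, (iv) presupposes that $X_a$ and $X_u$ are reached at leaves; the configuration where both leaves touch only $X_u$ and $X_c$ while $X_a$ is reached from an internal vertex has no branch vertex, so (iv) does not apply — it is killed by your final subpath argument, which must therefore be invoked \emph{before} you assert the endpoints' profiles, not after. In summary: your local-absorption skeleton can likely be repaired into a correct proof of the literal two-model statement, but as written the argument has genuine gaps, and even repaired it falls short of the combined statement the paper actually proves and uses.
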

    \begin{proof}
        Let $G$ be a graph that admits $H$ as an induced minor.
        First, observe that in $H$, $a$ and the internal vertices of $R$  (or $R_b,R_c$ for the generalized bull case) are all either internal vertices of a path with degree 2 or have degree 1. 
        Hence, by \cref{lem:minModelPath,lem:minModelSmallDeg}, we can always find a model of $H$ in $G$ such that the  
        corresponding bags are trivial.
        Thus, $H$ is $\{u,v\}$-NT. 
        We now prove that we can find a model where the bag of $v$ has the desired structure. 
        Let $\XH$ be a model of $H$ in $G$ with only $X_u$ and $X_v$ as non-trivial bags, that minimizes the size of $X_v$. 
        Notice in particular that the bag of each vertex of $H$ except $u$ is minimal. For a trivial bag $X_w, w\in V(H)$, let $x_w$ be the only vertex of $X_w$. 
        
        Let $P_{ac}$ be a shortest path in $X_v$ from a neighbor $y_a$ of $x_a$ to a neighbor $y_c$ of $x_c$, and let $P_{u}$ be a shortest path in $X_v$ from a neighbor of $X_u$ to a vertex of $P_{ac}$. If there are several choices for $P_u$, we choose one such that the intersection $x$ of $P_u$ and $P_{ac}$ is the closest to $y_c$ along $P_{ac}$.
        By connectivity, such paths exist, and by minimality of $X_v$, we have that $X_v=V(P_{ac})\cup V(P_u)$. 
        We want to show that $P_{ac}$ is reduced to a single vertex, so we suppose first that $P_{ac}$ has length at least $1$. 
        
        Let $P_a=y_aP_{ac}x$ and $P_c=xP_{ac}y_c$. Observe that no vertex of $P_a\setminus \{x\}$ is adjacent to $x_c$, 
        and in $P_c$, only $x$ can be adjacent to $x_a$ or $X_u$, and only $y_c$ can be adjacent to $x_c$. 
        Suppose that $P_a$ has length at least $1$, and let $x'_a$ be the neighbor of $x$ in $P_a$. 
        Then the collection $\X' = \{X'_w \colon w \in V(H)\}$ defined as: 
        \begin{align*}[left=\empheqlbrace]
                X_a' & = \{x_a'\}\\
                X_v' & = V(P_c)\cup V(P_u)\\
                X_u' & = X_u\cup\{x_a\}\cup(V(P_a)\setminus\{x_a',x\}) \\
                X_w' &= X_w\ \textstyle{for\ any\ other\ } w \in V(H)
        \end{align*}
        is a model of $H$ included in $\XH$ with a smaller bag for $v$, a contradiction.
        Therefore $x=y_a$ and $P_c=P_{ac}$. 
        
        Suppose now that $P_{c}$ has length at least 1, and let $x'_c$ be the neighbor of $x$ in $P_c$.
        Let us first assume that $H$ is a \GH{}. 
        Observe that $P= x'_cP_{c}x_{c}x_{b_{r-1}} \dots x_{b_2}x_{b}$ is  a path in $G$ 
        such that only $x_b$ is adjacent to $X_u$ and no vertex of $P$ is adjacent to $x_a$.
        Then we can define a new model $\XH' = \{X'_w \colon w \in V(H)\}$ of $H$ such that 
        $X'_a=X_a$, $X'_v=X_v\setminus (V(P_c)\setminus\{x\})$, $X'_c=\{x'_c\}$, 
        $X'_{b_{r-1}}$ is the next vertex of $P$ and so on for all the vertices of $R$. 
        Finally, $X'_u$ is obtained by adding the remaining vertices of $P$ to $X_u$, ensuring the adjacency with $X'_b$. 
        If $H$ is a \GB{}, we proceed similarly with $P= x'_cP_{c}x_{c} \dots x_{b_{s+1}}$; 
        then $X'_v$ and the bags from $X'_c$ to $X'_{b_{s+1}}$ are defined similarly, and $X_w'=X_w$ for any other $w\in V(H)$.
        It is then easy to check that in both cases, $\XH'$ is a model of $H$ included in $\XH$ with a smaller bag for $v$, a contradiction.
        
        Therefore, $P_{ac}$ is reduced to a single vertex $x$ that is adjacent to both $x_a$ and $x_c$, 
        and $X_v$ induces the path $P_u$ from $x$ to a neighbor $y_u$ of $X_u$. By minimality of $X_v$, $x$ is the only vertex of $X_v$ adjacent to both $x_a$ and $x_c$, and $y_u$ is the only vertex of $X_v$ adjacent to $X_u$. Observe that if $y_u$ is adjacent to $x_c$, then $x$ is the only vertex adjacent to $x_a$, otherwise we could restrict $P_u$ starting from another vertex of $P_u$ adjacent to $x_a$ that is closer to $y_u$ than $x$, which would result in a smaller bag for $v$.
        It only remains to prove that $y_u$ is adjacent to $x_c$.
        Suppose that $y_u$ is not adjacent to $x_c$. Then the collection $\XH' = \{X'_w \colon w \in V(H)\}$ with $ X_v' = X_v\setminus \{y_u\}$, 
        $X_u'=X_u\cup\{y_u\}$ and $X_w'=X_w$ for any other $w\in V(H)$ is a model of $H$ included in $\XH$ with a smaller bag for $v$, a contradiction. 

    \end{proof}

    From a model of the generalized house with $X_u$ and $X_v$ non-trivial, we can construct a model with only $u$ having a non-trivial bag (illustrated in \cref{fig:GH_1BB_model}). 
    
        \begin{lemma}\label{lem:1BBhouses}
            If $H$ is a \GH{}, then $H$ is $\{u\}$-NT.
        \end{lemma}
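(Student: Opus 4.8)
I want to show that a generalized house $H$ is $\{u\}$-NT, that is, whenever $\IM{H}{G}$ there is a model in which only the bag of $u$ is non-trivial. By \cref{lem:2BBhouses} I already have a model $\XH$ in which only $X_u$ and $X_v$ are non-trivial, and in which $G[X_v]$ is a path $P_u$ from a vertex $x$ (adjacent to both $x_a$ and $x_c$) to a vertex $y_u$ (the unique vertex of $X_v$ adjacent to $X_u$), with $y_u$ also adjacent to $x_c$. The plan is to absorb all of $X_v$ except the endpoint $x$ into the bag of $u$, leaving $X_v = \{x\}$ trivial while keeping every adjacency intact.

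**Key steps.**
First I would record the adjacencies forced by \cref{lem:2BBhouses}: the vertex $x$ is the unique vertex of $X_v$ seeing $x_a$, the vertex $y_u$ is the unique vertex of $X_v$ seeing $X_u$, and $y_u$ sees $x_c$ (and, since $H$ is a house rather than a bull, $R$ is a single path so $c=b_r$ is genuinely adjacent to $x_c$). Next I would define the new model by setting
\begin{align*}
    X_v' &= \{x\}, \\
    X_u' &= X_u \cup \bigl(V(P_u) \setminus \{x\}\bigr), \\
    X_w' &= X_w \quad \text{for every other } w \in V(H).
\end{align*}
I then verify this is a model of $H$ included in $\XH$. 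Connectivity of $X_u'$ follows because $y_u \in V(P_u)\setminus\{x\}$ is adjacent to $X_u$ and the rest of $P_u \setminus \{x\}$ is a connected subpath attached to $y_u$. Connectivity of $X_v' = \{x\}$ is trivial. For the adjacencies: $X_v'=\{x\}$ is adjacent to $X_a=\{x_a\}$ and to $X_c=\{x_c\}$ as required by the triangle-plus-$c$ structure, and $X_u'$ is adjacent to $X_a$ (via $x_a \in X_u$ is false — rather $X_u$ already met $X_a$ through the $u$–$a$ edge, which is untouched) and to $X_v'=\{x\}$ through the edge $x y'$ where $y'$ is the neighbour of $x$ on $P_u$, so the $u$–$v$ edge is preserved.

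**The main obstacle.**
The delicate point, and the step I expect to require the most care, is showing that moving $V(P_u)\setminus\{x\}$ into $X_u$ does not create any forbidden adjacency — in particular that no internal vertex of $P_u$ is adjacent to $X_a=\{x_a\}$ or to the path-bag $X_c, X_{b_{r-1}},\dots$ in a way that would force a non-edge of $H$ to become an edge. Since $x$ is the \emph{unique} vertex of $X_v$ adjacent to $x_a$, the vertices of $V(P_u)\setminus\{x\}$ are anticomplete to $X_a$, so $u$–$a$ adjacency is unaffected and no spurious $u$–$a$ issue arises (that edge exists in $H$ anyway). The real worry is adjacency to $x_c$: here I use that $y_u$ is the endpoint adjacent to $x_c$ and that $u$ and $c$ are non-adjacent in $H$, so I must check that enlarging $X_u$ with $y_u$ does not create a $u$–$c$ edge — but $uc \notin E(H)$, so I instead need $X_u'$ to be \emph{anticomplete} to $X_c$, which may fail if $y_u$ sees $x_c$. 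This is exactly why the absorption must be set up so that the vertex adjacent to $x_c$ stays on the $v$ side or is routed correctly; I anticipate needing to re-examine whether $y_u$ should remain in $X_v'$ or whether the decomposition of $P_u$ must be done from the $x$-end, keeping $X_v'$ as the maximal prefix of $P_u$ meeting $x_c$. Resolving this routing so that the unique $x_c$-neighbour lands in the correct bag is the crux of the argument, and once the adjacency bookkeeping is checked on both the $a$-side and the $c$-side, inclusion in $\XH$ and hence $\{u\}$-NT-ness follows immediately.
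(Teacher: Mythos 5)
There is a genuine gap, and it is exactly at the point you flag as ``the crux'': your absorption strategy cannot be repaired by any routing of $y_u$, because \cref{lem:2BBhouses} \emph{forces} the obstruction. That lemma's last step proves that the endpoint $y_u$ of the path $G[X_v]$ (the unique vertex of $X_v$ adjacent to $X_u$) is necessarily adjacent to $x_c$ --- if it were not, $y_u$ could be moved into $X_u$, contradicting the minimality of $X_v$. So whenever $|X_v|\geq 2$ you are stuck: placing $y_u$ in $X'_u$ creates the forbidden $u$--$c$ adjacency (as you noticed); keeping $X'_v=\{y_u\}$ breaks the required $a$--$v$ adjacency, since by minimality $y_u$ is not adjacent to $x_a$; and discarding $y_u$ disconnects the rest of $X_v$ from $X_u$ and kills the $u$--$v$ edge, as $y_u$ is the \emph{unique} $X_v$-vertex seeing $X_u$. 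In short, no model with $X'_v=\{x\}$ and all bags other than $X_u$ unchanged can exist; the ``adjacency bookkeeping'' you defer is not checkable but provably impossible, so your proof cannot be completed along these lines.

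The paper's proof uses a different idea that you are missing: instead of shrinking $X_v$ in place, it \emph{rotates the labels of $H$ around its cycle}. Take a shortest path $x_1\dots x_t$ in $G[X_u\cup X_v]$ from $x_1$ (the endpoint of $X_v$ adjacent to $x_a$ and $x_c$) to a vertex $x_t\in X_u$ adjacent to $X_b$, and set $X'_a=\{x_1\}$, $X'_u=\{x_i : 1<i<t\}$, $X'_{b_1}=\{x_t\}$, $X'_{b_i}=X_{b_{i-1}}$ for $2\leq i\leq r$, and $X'_v=X_{b_r}$ (the old bag of $c$); the old $X_a$ is simply discarded. The troublesome adjacency $y_u x_c$ is now the \emph{required} $u$--$v$ edge, since $y_u$ lies on the path (it is the only crossing point from $X_v$ to $X_u$) and $x_c$ is the new $v$-bag, while $x_1$'s adjacencies to $x_2$ and $x_c$ realize the edges $au$ and $av$. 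This relabeling exploits the cycle $a,v,c=b_r,\dots,b_1=b,u$ of the \GH{}, which is precisely why the argument does not extend to \GB{s} (where $R$ is broken) --- consistent with the paper's example showing some \GB{s} are not $\{u\}$-NT.
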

        \begin{proof}
             Let $G$ be a graph such that \IM{H}{G} and $\XH$ a model of $H$ as described in \cref{lem:2BBhouses} (keeping the same notation).
             Then we can construct a model of $H$ in $G$ such that only $u$ has a non-trivial bag (see \cref{fig:GH_1BB_model} for the case of the house).
             We can assume that $X_v$ contains at least two vertices, otherwise $\XH$ is already the sought model, 
             and that the vertex in $X_v$ adjacent to both $X_u$ and $X_c$ is not adjacent to $X_a$, by minimality of $X_v$. 
             Let $x_1$ be the vertex of $X_v$ adjacent to both $X_a$ and $X_c$ (recall that this vertex exists and is unique by \cref{lem:2BBhouses}). 
             Let $x_t$ be a vertex of $X_u$ adjacent to $X_b$. 
             Let $x_1\dots x_t$ be a shortest path in $G[X_u\cup X_v]$ from $x_1$ to $x_t$. 
             Observe that this path contains at least $3$ vertices, otherwise $X_v$ would contain only one vertex. 
             
             Then the collection $\XH' = \{X'_w \colon w \in V(H)\}$ defined as:
             \begin{align*}[left=\empheqlbrace]
                 X'_a &= \{x_1\}\\
                 X'_u &= \{x_i \colon 1<i<t\}\\
                 X'_{b_1} &= \{x_t\}\\
                 X'_{b_i} &= X_{b_{i-1}}, 2 \leqslant i \leqslant r
             \end{align*} 
             and $X'_v=X_{b_r}$ (recall that $b=b_1$ and $c=b_r$) is a model of $H$.
        \end{proof}

        Unfortunately, this construction does not extend to generalized bulls, as some of them are not $\{u\}$-NT, see \cref{fig:GH_1BB_model}. However, \cref{th:housePoly} presents a polynomial time algorithm for detecting \GB{s}, constructing models with possibly two non-trivial bags. 
        The idea behind the algorithm is, given $H$, to compute first a premodel of $H$ where each bag contains one vertex, and such that the bags are adjacent if and only if the vertices are adjacent in $H$, except between the bags $X_u,X_v$ and $X_u,X_b$ that might not be adjacent yet. Then, if this premodel can be extended into a model of $H$, we show that we can connect $X_u, X_b$ by choosing an arbitrary path between their respective vertices, then connecting the vertex of $X_v$ to this path. 
    
        \begin{figure}[ht]
            \centering
            \begin{subfigure}[t]{0.48\linewidth}
                \centering\includegraphics[scale=2]{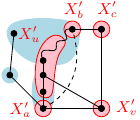}
                \caption{~\label{fig:GH1}}
            \end{subfigure} 
            \begin{subfigure}[t]{0.48\linewidth}
                \centering\includegraphics[scale=2]{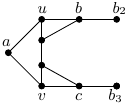}
                \caption{~\label{fig:GH2}}
            \end{subfigure}
            \caption{(a) Construction of a model for the house with only one non-trivial bag. (b) Example of a graph that admits the bull with subdivided horns as induced minor, but with at least two big bags in \emph{every} model.}
            \label{fig:GH_1BB_model}
        \end{figure}
        
    \ThmHousesAndBulls*
    
    \begin{proof}
    
    If $H$ is a \GH{}, the result follows directly from \cref{lem:1BBhouses} and \cref{th:1BBpoly}. In the following we present a method to solve \IMC{$H$} in polynomial time for \GB{s}. However, this method works the same way for \GH{s}.

    Let $G$ be a graph. From \cref{lem:2BBhouses} we know that if $G$ admits $H$ as an induced minor, then there exists a model of $H$ 
    where only the bags of $u$ and $v$ are non-trivial, and such that the bag of $v$ contains a vertex adjacent to both the bags of $a$ and $c$. Therefore,  
    there is a subgraph $Q$ in $G$ that forms a couple of paths $x_b\dots x_{b_{s}}$ and $x_{b_{s+1}}\dots x_cx_vx_ax_u$, with $x_w$ belonging to the bag of $w\in V(H)$ in that model (in the case of the \GH{}, $Q$ must form a path). 
    From each such subgraph $Q$ in $G$, we will try to construct a model of $H$. We start with a premodel $\X$ with $X_w = \{x_w\}$ for $w\in V(H)$.
    At this step, the bag $X_u$ may not be adjacent to $X_v$ and $X_b$, but for every other pair of vertices $w,w'$ in $H$, $X_{w'}$ and $X_w$ must be adjacent if and only if $w$ and $w'$ are adjacent. If this is not the case, we can reject the premodel. 
    
    We will now add vertices to the bags $X_u$ and $X_v$ in order for $\X$ to become a model of $H$. 
    Observe that if a model extending $\X$ exists, then the bags of $u$ and $v$ do not contain any neighbor of $x_{b_i}, 1<i<r$, so we can restrict the graph to $G'=G\setminus (\bigcup_{i=2}^{r-1} N[x_{b_i}]\setminus  \{x_b,x_c\}])$ (we keep $x_b$ and $x_c$ in $G'$ for the sake of simplicity).

    If there is a connected component $C$ in $G'\setminus (N[x_c] \cup \{x_a,x_b,x_v\})$ adjacent to $x_a,x_b$ and $x_v$, then we can set $X_u = C$ and we have constructed a model of $H$.
    Suppose now that there is no such component. 
    Observe that if there is a model of $H$ that extends $\X$, then, there exists a path $P_{ub}$ from $x_u$ to $x_b$ in $G'\setminus(\{x_a,x_v\}\cup N[x_c])$, and a path $P_v$ from $x_v$ to a vertex of $P_{ub}$ in $G'\setminus(\{x_a,x_b,x_c\})$. 
    Moreover, if there is a model of $H$ that extends $\X$ and we have not already found one, this means that for every pair of paths $P_{ub}$ and $P_v$, the latter must intersect the neighborhood of $x_c$. In particular, this means that the bag of $v$ must be non-trivial. 
    
    In the following claim, we show that to construct a model of $H$ extending $\X$, the choice of the path $P_{ub}$ does not matter. 
    
    \begin{claim}
        If there exists a model of $H$ extending $\X$ with only $u,v$ having non-trivial bags, then for any path $P_{ub}$ from $x_u$ to $x_b$ in $G'\setminus(\{x_a,x_v\}\cup N[x_c])$, there exists a model $\X'$ extending $\X$ with the internal vertices of $P_{ub}$ in $X'_u$.
    \end{claim}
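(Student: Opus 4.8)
The plan is to start from the hypothesized model $\X^\star = \{X^\star_w : w \in V(H)\}$ extending $\X$ in which only $X^\star_u$ and $X^\star_v$ are non-trivial, so that $x_u \in X^\star_u$, $x_v \in X^\star_v$, and $X^\star_w = \{x_w\}$ for every other $w \in V(H)$. From $\X^\star$ and the prescribed path $P_{ub}$ I would build a new model by absorbing $P_{ub}$ into the bag of $u$ and letting the bag of $v$ be whatever survives: set $X'_u := X^\star_u \cup (V(P_{ub}) \setminus \{x_b\})$, let $X'_v$ be the connected component of $x_v$ in $G[X^\star_v \setminus V(P_{ub})]$, and keep all remaining bags trivial and unchanged. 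Since $P_{ub}$ carries exactly the internal vertices we want, this gives $V(P_{ub})\setminus\{x_b\} \subseteq X'_u$ at once, and $\X'$ extends $\X$ because $x_u \in X'_u$, $x_v \in X'_v$, and the trivial bags are preserved.

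Next I would dispatch the routine conditions. The set $X'_u$ is connected because $X^\star_u$ and $V(P_{ub})\setminus\{x_b\}$ both contain $x_u$; it is disjoint from $X'_v$ by construction and from every trivial bag, since $X^\star_u$ is disjoint from them and $P_{ub}$ lives in $G' \setminus (\{x_a,x_v\} \cup N[x_c])$, which excludes $x_a$, $x_c$, $x_v$ and, through $G'$, all internal path-vertices $x_{b_i}$. The adjacencies of $X'_u$ to the trivial bags are then the required ones: it meets $x_a$ and $x_b$ (through $X^\star_u$ and the penultimate vertex of $P_{ub}$), but avoids $N[x_c]$ and the neighborhoods of the internal $x_{b_i}$; these non-adjacencies are inherited from the validity of $\X^\star$ on the $X^\star_u$ part and from the definition of $G' \setminus N[x_c]$ on the $P_{ub}$ part. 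Symmetrically, $X'_v \subseteq X^\star_v$ inherits from $\X^\star$ adjacency to $x_a$ and $x_c$ (both witnessed by $x_v$) and non-adjacency to $x_b$ and to every $x_{b_i}$ with $i < r$; it contains $x_v$ and is connected by definition.

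The only genuinely delicate point, and the heart of the claim, is that $X'_v$ is adjacent to $X'_u$ no matter how $P_{ub}$ was chosen. I would settle this with a cut observation. If $V(P_{ub}) \cap X^\star_v = \emptyset$, then $X'_v = X^\star_v$, which is adjacent to $X^\star_u \subseteq X'_u$ by validity of $\X^\star$. Otherwise $X'_v$ is a non-empty proper subset of the connected set $X^\star_v$, so some vertex of $X'_v$ has a neighbor $w'$ in $X^\star_v \setminus X'_v$; by maximality of the component $X'_v$ such a $w'$ cannot lie in $X^\star_v \setminus V(P_{ub})$, hence $w' \in X^\star_v \cap V(P_{ub})$, and since $x_b \notin X^\star_v$ we get $w' \in V(P_{ub}) \setminus \{x_b\} \subseteq X'_u$. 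Thus $X'_v$ is adjacent to $X'_u$ in both cases. The intuition is that whenever the arbitrary path $P_{ub}$ slices through the old bag of $v$, the slicing vertices are exactly the ones handed over to $X'_u$, so they restore precisely the adjacency they destroyed.

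Collecting these checks shows that $\X'$ is a model of $H$ extending $\X$ with only $u$ and $v$ non-trivial and with the internal vertices of $P_{ub}$ in $X'_u$, which is the claim. I expect the bookkeeping of the many non-adjacency conditions, in particular distinguishing the $R_b$ and $R_c$ sides for the generalized bull and using that $G'$ already deletes the closed neighborhoods of the internal path-vertices, to be the most tedious part, whereas the cut argument above is the conceptual crux.
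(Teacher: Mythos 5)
Your proof is correct, and it reaches the claim by a genuinely different route than the paper. The paper's proof first takes a model extending $\X$ that minimizes the size of $X'_v$ (and then of $X'_u$), uses this minimality to show that $X'_v$ is a path $P'_v$ from $x_v$ to a vertex $y_{uc}$ adjacent to both $x_c$ and $X'_u$, and then argues by cases on whether the given path $P_{ub}$ meets $P'_v$: if not, it reconnects $y_{uc}$ to $x_u$ through a path inside the old $X'_u$ and adds both paths back; if so, it truncates $P'_v$ at the intersection vertex closest to $x_v$, which lies on $P_{ub}$ and hence in the new $u$-bag. You dispense with both the minimization and the structural description of the $v$-bag: you keep the entire witness bag $X^\star_u$, adjoin $V(P_{ub})\setminus\{x_b\}$ to it, cut the $v$-bag down to the component of $x_v$ in $G[X^\star_v\setminus V(P_{ub})]$, and restore the $u$--$v$ adjacency with a cut argument (any edge leaving that component within the connected set $X^\star_v$ must hit $V(P_{ub})\setminus\{x_b\}\subseteq X'_u$, since $x_b\notin X^\star_v$). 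Both proofs turn on the same insight, namely that the vertices where $P_{ub}$ slices through the old $v$-bag are handed to $u$ and so re-create exactly the adjacency they destroy, but your packaging is more economical: the maximal-component trick subsumes the paper's two-case analysis (your empty-intersection case is the paper's disjoint case, and you need no connector path there because you retained all of $X^\star_u$), and you need no structure lemma for $X'_v$. The trade-off is that the paper's construction outputs path-shaped bags, consistent with the minimal-model style it reuses in its second claim, whereas your model may be much larger; since the second claim re-minimizes from scratch anyway, the weaker output costs nothing for the theorem. Your verification of the side conditions (disjointness and the forbidden adjacencies via the definition of $G'$ and the exclusion of $N[x_c]$ for the $P_{ub}$ part, via the validity of the witness model for the $X^\star_u$ part, and adjacency of the $v$-bag to $x_a$ and $x_c$ witnessed by the premodel edges at $x_v$) is complete and sound.
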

    
    \begin{proofclaim}
        Let $\X' = \{X'_w \colon w \in V(H)\}$ be a model of $H$ that extends $\X$ 
        and that first minimizes the size of $X_v'$ and then the size of $X_u'$. 
        We can assume in particular that all the bags in $\X'$ are minimal. 
        We can moreover assume that $X_v'$ contains only the vertices of a path $P_v'$ from $x_v$ to a vertex $y_{uc}$ adjacent to both $x_c$ and a vertex of $X'_u$, otherwise it would not be minimal (note that this path avoids the neighbors of $x_b$).    
        
        Now, let $P_{ub}$ be any path from $x_u$ to $x_b$ in $G'\setminus N[x_c]$. We want to add vertices to $X_u,X_v$ to create a new model of $H$ extending $\X$ and such that the internal vertices of $P_{ub}$ belong to $X_u$.
        First, we add the internal vertices of $P_{ub}$ into $X_u$.
        If $P_{ub}$ and $P'_v$ does not intersect, then let $P'_{vu}$ be a path in $G[X'_u\cup\{y_{uc}\}]$ from $y_{uc}$ to $x_u$. Then adding $P'_{vu}$ into $X_u$ and $P'_v$ into $X_v$ result in a model of $H$. 
        Else, $P_{ub}$ and $P'_v$ intersect. Let $z_v$ be the vertex on both paths that is the closest to $x_v$ in $P'_v$. Observe that the subpath of $P'_v$ going from $x_v$ to $z_v$ does not contain any vertex of $P_{ub}$ except $z_v$. Hence, adding only the internal vertices of this subpath into $X_v$ results in a model of $H$.    
    \end{proofclaim}

    Let $P_{ub}$ be a path from $x_u$ to $x_b$ in $G'\setminus(\{x_a,x_v\}\cup N[c])$. Put its internal vertices into $X_u$.
    
    \begin{claim}
        If there exists a model of $H$ extending $\X$, then there exists $z\in N(x_c)$ such that there exists a model $\X' = \{X'_w \colon w \in V(H)\}$ extending $\X$ with $X'_v$ containing the vertices of any shortest path from $x_v$ to $z$ in $G'\setminus (N[x_b]\cup\{x_a,x_c\})$.
    \end{claim}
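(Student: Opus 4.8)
The plan is to pin down the vertex $z$ from a cheapest model and then to show that \emph{any} shortest $x_v$–$z$ path in the restricted graph can serve, verbatim, as the bag $X_v'$. Throughout, write $G_1 = G' \setminus (N[x_b] \cup \{x_a, x_c\})$ for the graph appearing in the statement.

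First I would set up the extremal model. By hypothesis a model of $H$ extends $\X$ (recall that at this point $X_u$ already contains the internal vertices of $P_{ub}$), and, exactly as in the previous claim, I may take it with only $u$ and $v$ non-trivial; among such models I choose one, $\X^* = \{X_w^* \colon w \in V(H)\}$, minimizing $|X_v^*|$. By \cref{lem:2BBhouses}, $G[X_v^*]$ is then a path whose endpoint adjacent to $X_a$ is $x_v$ and whose other endpoint $z$ is adjacent to both $x_c$ and $X_u^*$. Since $X_v^*$ is anticomplete to $x_b$ and to each $x_{b_i}$ with $1 < i < r$, and avoids $x_a, x_c$, the whole path $X_v^*$ lies in $G_1$; hence $z \in N(x_c)$ and the $G_1$-distance from $x_v$ to $z$ is at most $|X_v^*| - 1$. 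This $z$ is the vertex asserted by the claim.

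The heart of the argument is to prove that every shortest $x_v$–$z$ path $P = p_0 \cdots p_\ell$ in $G_1$ (with $p_0 = x_v$ and $p_\ell = z$) is disjoint from $X_u^*$. Suppose instead that $P$ meets $X_u^*$, and let $q = p_j$ be the first such vertex; then $1 \leq j \leq \ell - 1$ because neither $x_v$ nor $z$ belongs to $X_u^*$. I would truncate $P$ at $q$ and consider the collection $\X''$ obtained from $\X^*$ by keeping $X_u'' = X_u^*$ together with every trivial bag, and setting $X_v'' = \{p_0, \dots, p_{j-1}\}$. This $\X''$ is again a model extending $\X$: the vertices $p_0, \dots, p_{j-1}$ avoid $X_u^*$ by the choice of $q$ and avoid all trivial bags because $P \subseteq G_1$; the endpoint $p_0 = x_v$ keeps the adjacencies to $x_a$ and $x_c$; the endpoint $p_{j-1}$ is adjacent to $q \in X_u^*$, producing the edge between $X_v''$ and $X_u''$; and membership in $G_1$ forbids any adjacency to $x_b$ or to the internal $x_{b_i}$. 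But then $|X_v''| = j \leq \ell - 1 < |X_v^*|$, since $|X_v^*| \geq \ell + 1$, contradicting the minimality of $|X_v^*|$. I expect this truncation-versus-minimality step to be the main obstacle, as it is where one must simultaneously certify that the shortened collection is a valid model and that it is strictly cheaper than $\X^*$.

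Granting $V(P) \cap X_u^* = \emptyset$, I would conclude by setting $X_u' = X_u^*$, $X_v' = V(P)$, and $X_w' = X_w^*$ for every other $w$. Disjointness and connectivity are then immediate. Every adjacency not involving $v$ is inherited verbatim from $\X^*$; for $v$, the vertex $x_v \in V(P)$ provides adjacency to $x_a$ and $x_c$, the endpoint $z = p_\ell$ retains a neighbor in $X_u^*$ (which lies outside $V(P)$, hence in $X_u'$) providing the edge between $X_v'$ and $X_u'$, and $V(P) \subseteq G_1$ forbids adjacency to $x_b$ and to the internal $x_{b_i}$. Therefore $\X'$ is a model extending $\X$ in which $X_v'$ is precisely the chosen shortest path, which establishes the claim for this $z$ and for an arbitrary shortest path to it.
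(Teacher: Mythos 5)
Your proof is correct and follows the paper's overall template---fix a model extending $\X$ with only $u,v$ non-trivial that minimizes $|X_v^*|$, deduce that $G[X_v^*]$ is a path from $x_v$ to a vertex $z$ adjacent to both $x_c$ and $X_u^*$, then compare an arbitrary shortest $x_v$--$z$ path $P$ in $G_1 = G'\setminus(N[x_b]\cup\{x_a,x_c\})$ against the model---but you resolve the key case differently, and arguably better. The paper, when $P$ meets $X_u'$, truncates $P$ at the first vertex $z'$ adjacent to $X_u'$ and takes that prefix as the new bag of $v$; this outputs a bag that does \emph{not} contain all of $V(P)$ and ends at a vertex $z'$ that need not lie in $N(x_c)$, so it matches the claim's statement only under a loose reading (the paper's closing remark that the prefix is ``a shortest path from $x_v$ to $z$'' containing a neighbor of $x_c$ is at best imprecise). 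You instead use the same truncation to contradict the minimality of $|X_v^*|$: since $G[X_v^*]$ is itself an $x_v$--$z$ path whose vertices all lie in $G_1$, you get $|X_v^*|\geq \ell+1$, while the truncated prefix has at most $\ell-1$ vertices, so the intersection case is vacuous and the full path $V(P)$ serves verbatim as $X_v'$. In effect you show that the paper's case distinction cannot arise under its own minimality assumption, and you thereby prove the claim as literally stated (any shortest path to the fixed $z$, contained in the bag), which is what the subsequent algorithm actually uses. Two small points to tighten: your appeal to \cref{lem:2BBhouses} is not quite licit, since that lemma concerns unrestricted models rather than the minimizer among models extending $\X$; you should rederive the path structure and the adjacency of $z$ to $x_c$ by the same exchange arguments (take a shortest path in $G[X_v^*]$ from $x_v$ to a neighbor of $X_u^*$, and if $z\not\sim x_c$ move $z$ into $X_u^*$, which is legal precisely because $u$ and $c$ are non-adjacent in $H$)---though the paper is equally terse here (``otherwise $X_v'$ would not be minimal''). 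Likewise, restricting to models with only $u,v$ non-trivial is not literally granted by the hypothesis ``a model extending $\X$'', but in context the model known to exist is the output of the previous claim and has this form; this imprecision is shared with the paper.
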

    \begin{proofclaim}
        Let $\X' = \{X'_w \colon w \in V(H)\}$ be a model of $H$ that extends $\X$ and minimizes the size of $X_v'$. 
        We can assume in particular that all the bags in $\X'$ are minimal.
        We can moreover assume that $G[X_v']$ is a path from $x_v$ to a vertex $z\in X_v'$ adjacent to both $x_c$ and a vertex in $X_u'$, otherwise $X_v'$ would not be minimal. 
        Now consider a shortest path $P_v'$ from $x_v$ to $z$ in $G'\setminus (N[x_b]\cup\{x_a,x_c\})$. If replacing $X_v'$ by the vertices of the path $P_v'$ does not yield a model, this means that $P'_v$ intersects $X_u'$. 
        Then we consider the subpath of $P_v'$ starting in $x_v$ and ending in the first vertex $z'$ adjacent to a vertex of $X_u'$:
        replacing $X_v'$ by the vertices of that subpath yield a model of $H$.
        Note that this is a shortest path from $x_v$ to $z$ in $G'\setminus (N[x_b]\cup\{x_a,x_c\})$, and this path contains a neighbor of $x_c$ (different from $x_v$).    
    \end{proofclaim}
    
    For each vertex $y_c \in N(x_c)$, we try to find a shortest path $P_v$ from $x_v$ to $y_c$ in $G'\setminus (N[x_b]\cup\{x_a,x_c\})$, and add its vertices to $X_v$. We next try to find a path from a neighbor of $y_c$ to any vertex on the path $P_{ub}$ (except $x_b$) in the graph $G\setminus (V(P_v)\cup N(x_c))$ and add its vertices into $X_u$. If both paths are found, then we constructed the sought bags of $u$ and $v$ and thus found a model of $H$ in $G$. If for each $y_c \in N(x_c)$, no pair of paths are found, then there was no model extending $\X$. 
    
    We repeat this process for each possible couple of paths $Q$. If $G$ admits $H$ as induced minor, this process will find eventually a model of $H$. As $H$ is fixed, this process can be done in polynomial time. 
    \end{proof}

    \subsection{Complete split graphs}\label{sec:split}
    
    Let $k,p \in \mathbb{N}$.
    The graph $S_{k,p}$, obtained by adding all possible edges between a clique of size $k$ and an independent set of size $p$, is called a \emph{complete split graph}.
    For $k=2$ and $p=3$, $S_{k,p}$ is also known as the \emph{Crown}, and for $k=3$ and $p=2$, it corresponds to $K_5^-$ ($K_5$ minus an edge); see \cref{fig:order5list} for a graphical representation.
    In this section, we show that if $k\leq3$, then \IMC{$S_{k,p}$} can be solved in polynomial time. The idea of the algorithm is to first guess the $p$ vertices in the independent set. Then we try to guess $k$ pairwise disjoint sets, each containing at least one neighbor of each of the $p$ vertices, and check if we can construct a model of a clique using these sets. The last part can be done in quasi-linear time with the algorithm of \authorcite{KPS24} for the \textsc{Rooted Minor Containment problem}. In this problem, given graphs $G$ and $H$, and a premodel (a \emph{root}) of $H$ in $G$, the goal is to find a model of $H$ in $G$ that extends the given premodel. 
    In the theorem below, the $O_{H,|X|}(\cdot)$-notation hides factors that depend on $H$ and $X$, and are computable. The set $X$ is the set of vertices in the root.
    \begin{theorem}[{\cite[Theorem 1.1]{KPS24}}]\label{thm:rooted_minor}
        The \textsc{Rooted Minor Containment} problem can be solved in time $O_{H,|X|}((|V(G)|+|E(G)|)^{1+\o(1)})$. In case of a positive answer, the algorithm also provides a model within the same running time. 
    \end{theorem}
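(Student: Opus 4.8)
The plan is to follow the Graph Minors algorithmic framework of Robertson and Seymour, accelerated so that the total running time becomes almost-linear. The engine is a \emph{win/win} dichotomy controlled by the treewidth of $G$: I would fix a threshold $t = t(H,|X|)$ and, at each stage, either certify that $\tw(G) \le t$ — in which case the rooted model is found by dynamic programming — or exhibit an \emph{irrelevant vertex} whose deletion preserves the existence of a minor model of $H$ extending the given root, delete it, and recurse. Each deletion strictly decreases $|V(G)|$, so provided the total work spent locating and removing irrelevant vertices is almost-linear, the two regimes combine to the stated bound. Throughout, the $|X|$ prescribed root vertices are treated as fixed terminals that every reduction must respect.

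For the bounded-treewidth regime I would first compute, in almost-linear time, a tree decomposition of width $\O(t)$ (or correctly report $\tw(G) > t$), using the single-exponential almost-linear treewidth approximation. On such a decomposition, the predicate ``$G$ has a minor model of $H$ extending the root'' is expressible in monadic second-order logic with boundedly many free set variables: one candidate branch set $X_v$ per vertex $v \in V(H)$, subject to the constraints that each $G[X_v]$ is connected, the $X_v$ are pairwise disjoint, for every $uv \in E(H)$ there is an edge of $G$ between $X_u$ and $X_v$, and each rooted $X_v$ contains its prescribed vertices of $X$. For fixed $H$ and $|X|$ this is a fixed MSO sentence, so Courcelle's theorem solves it in time linear in the number of bags times a function of $t$, $H$, $|X|$. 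To make the answer constructive I would either enrich the DP tables to carry a witnessing model or use standard self-reduction, both within the same running time.

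For the large-treewidth regime I would invoke the flat wall theorem: if $\tw(G) > t$ then $G$ contains either a large clique minor or, after deleting a bounded set of apex vertices, a large \emph{flat wall}. I would choose this wall in a part of $G$ that avoids both the apices and the bounded root set $X$, and then argue that the central vertex of a sufficiently large such wall is irrelevant. The point is that any rooted minor model passing through the center can be rerouted through the concentric cycles surrounding it, since a large flat wall supplies enough vertex-disjoint routing ``around'' the center to reconnect the affected branch sets without disturbing the prescribed root bags. The rooted aspect is exactly where care is needed: the roots act as extra protected terminals, so the wall and its center must lie in the region that no root can force its way into, guaranteeing that rerouting never has to touch $X$.

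The step I expect to be the main obstacle is carrying out the irrelevant-vertex reduction in almost-linear \emph{total} time. Recomputing a tree decomposition and a flat wall from scratch after each single deletion would be far too slow, so the algorithm must instead either locate many irrelevant vertices simultaneously from one structural decomposition and delete them in a batch, or maintain the decomposition and the wall under vertex deletions with dynamic data structures so that the amortized cost per deletion remains almost-linear. Establishing that the batched or dynamically maintained deletions stay irrelevant for the \emph{rooted} variant — as $G$ shrinks while $X$ is held fixed — is the delicate correctness point, and controlling the amortized bookkeeping is the delicate efficiency point. Once repeated deletions drive the treewidth below $t$, the bounded-treewidth DP of the second step finishes and outputs the model, yielding the overall $O_{H,|X|}((|V(G)|+|E(G)|)^{1+\o(1)})$ running time.
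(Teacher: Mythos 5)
First, a point of calibration: the paper does not prove this statement at all --- it is imported verbatim as \cref{thm:rooted_minor} from \cite[Theorem~1.1]{KPS24}, so there is no in-paper proof to compare against; the only fair standard is whether your sketch would constitute a proof of the cited result. It would not. What you describe --- win/win on a treewidth threshold $t(H,|X|)$, Courcelle-style dynamic programming with the roots as prescribed terminals when $\tw(G)\le t$, and a flat-wall/irrelevant-vertex deletion when $\tw(G)>t$ --- is the classical Robertson--Seymour framework (with the rooted case handled, as you say, by treating $X$ as protected terminals, which is indeed how the Graph Minors machinery accommodates roots). But this framework, executed as you state it, performs up to $\Omega(|V(G)|)$ rounds, each requiring a fresh treewidth approximation and flat-wall computation, and therefore yields at best quadratic time --- essentially the bound of \authorcite{KKR12} --- not the claimed $O_{H,|X|}((|V(G)|+|E(G)|)^{1+\o(1)})$.

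You correctly identify this as ``the main obstacle,'' but identifying it is where your proposal stops, and that obstacle is precisely the entire content of \cite{KPS24} beyond prior work. Saying the algorithm ``must either locate many irrelevant vertices simultaneously \ldots{} or maintain the decomposition and the wall under vertex deletions with dynamic data structures'' names two plausible strategies without establishing either: you give no construction of a dynamic tree-decomposition/flat-wall data structure with almost-linear amortized update cost, no argument that a batch of vertices declared irrelevant with respect to one decomposition remains simultaneously irrelevant as the batch is deleted (irrelevance is defined relative to the current graph, so batched deletion needs a separate correctness argument, and the rooted variant adds the complication you flag but do not resolve), and no accounting showing the total work telescopes to $(|V(G)|+|E(G)|)^{1+\o(1)}$. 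The actual proof in \cite{KPS24} rests on genuinely new machinery (in particular, efficient dynamic maintenance of tree decompositions and a carefully engineered recursion replacing the one-vertex-at-a-time loop), none of which is reconstructed here. So the proposal is a correct high-level roadmap of the proof's shape, with a genuine gap exactly at the step that distinguishes the almost-linear result from its quadratic predecessors.
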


    Using this result, we can prove the following theorem. We say that a clique $K$ of a graph $G$ is \emph{universal} if for every $x\in K$, $N[x] = V(G)$.
    
    \begin{theorem}\label{thm:DBBpoly}
        Let $H$ be a graph with some universal clique $K$.
        If $H$ is $K$-NT, then \IMC{$H$} is polynomial-time solvable.
    \end{theorem}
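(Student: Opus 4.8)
The plan is to exploit the $K$-NT property to reduce \IMC{$H$} to polynomially many instances of \textsc{Rooted Minor Containment} for the clique $K_{|K|}$, each solvable in quasi-linear time by \cref{thm:rooted_minor}. Write $I = V(H) \setminus K$, and note that $|I|$ and $|K|$ are constants since $H$ is fixed. Because $H$ is $K$-NT, if \IM{H}{G} then there is a model $\XH$ in which every bag $X_v$ with $v \in I$ is trivial, say $X_v = \{x_v\}$, and only the bags of $K$ may be non-trivial. The first step is therefore to guess these trivial bags: I enumerate all assignments $v \mapsto x_v \in V(G)$ for $v \in I$, of which there are $\O(|V(G)|^{|I|})$. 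For each such assignment I check, in polynomial time, that the chosen vertices realise exactly the adjacencies of $H[I]$, that is, $x_vx_w \in E(G)$ if and only if $vw \in E(H)$ for all distinct $v,w \in I$; any assignment failing this test is discarded. This verification is what enforces the \emph{induced} constraints among the non-$K$ part, and it is the reason I isolate the clique part below.

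It then remains to find connected, pairwise disjoint bags $\{X_x : x \in K\}$, disjoint from the set $\{x_v : v \in I\}$, that are pairwise adjacent and such that each $X_x$ is adjacent to every $x_v$. Since $K$ is universal in $H$, these are the only adjacency requirements involving $K$ (there are no forbidden adjacencies between a bag of $K$ and any other bag), and since $K$ is a clique, any minor model of $K_{|K|}$ is automatically an induced minor model of $K_{|K|}$; this is exactly what lets me use the non-induced \textsc{Rooted Minor} machinery. To encode the adjacency of each $X_x$ to each $x_v$, I additionally guess, for every pair $(x,v) \in K \times I$, a vertex $y_{x,v} \in N_G(x_v)$, and I set the root bag of $x$ to be $R_x = \{y_{x,v} : v \in I\}$; there are $\O(|V(G)|^{|K|\cdot|I|})$ such guesses, still polynomial. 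After discarding guesses in which the sets $R_x$ are not pairwise disjoint or meet some $x_v$, I apply \cref{thm:rooted_minor} to decide whether $K_{|K|}$ has a minor model in $G' = G \setminus \{x_v : v \in I\}$ extending the premodel $\{R_x : x \in K\}$, obtaining such a model when one exists.

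If some branch succeeds, the resulting clique model $\{X_x : x \in K\}$ together with the trivial bags $\{\{x_v\} : v \in I\}$ forms an induced minor model of $H$: the adjacencies inside $I$ are correct by the first step, the bags of $K$ are pairwise adjacent because they form a model of $K_{|K|}$, and each $X_x$ contains $y_{x,v} \in N_G(x_v)$ and is hence adjacent to $x_v$, which is exactly what universality of $K$ demands. Conversely, if \IM{H}{G}, the $K$-NT model provides the correct singletons $x_v$, and choosing each $y_{x,v}$ to be a vertex of the genuine bag $X_x$ adjacent to $x_v$ shows that the bags of $K$ form a rooted clique minor in $G'$; hence the corresponding branch succeeds. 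As $H$ is fixed, there are polynomially many branches and each call to \cref{thm:rooted_minor} runs in quasi-linear time, so the whole procedure is polynomial.

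I expect the point needing the most care to be the justification that a \emph{minor} model of $K_{|K|}$, rather than an induced minor model, is sufficient, and that gluing it to the trivial bags yields a valid induced minor model of all of $H$. Both facts rely on $K$ being a \emph{universal} clique: universality removes every potential forbidden adjacency between a bag of $K$ and the remaining bags, while cliqueness makes the minor and induced-minor relations coincide for $K_{|K|}$. The remaining thing to monitor is the guessing budget, namely that enumerating the singletons $x_v$ and the seeds $y_{x,v}$ leaves only polynomially many branches; this holds precisely because $|I|$ and $|K|\cdot|I|$ are constants.
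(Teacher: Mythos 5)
Your proof is correct and follows essentially the same approach as the paper: enumerate the trivial bags for $I = V(H)\setminus K$, verify the induced adjacencies among them, guess constant-size seed sets (your $R_x$, the paper's $Z_i$) containing one neighbor of each trivial bag, and invoke the rooted minor algorithm of \cref{thm:rooted_minor} for $K_{|K|}$, with universality of $K$ justifying that a non-induced clique minor suffices. Your explicit check that $x_vx_w \in E(G)$ if and only if $vw \in E(H)$ is in fact the correct general condition (the paper's phrasing assumes $I$ is independent, which holds for $S_{k,p}$ but not for arbitrary $H$ with a universal clique), so if anything your write-up is slightly more careful.
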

    \begin{proof} 
        Suppose that the graph $H$ is $K$-NT for some universal clique $K$ of size $k$.
        Let $G$ be a graph and $I = V(H)\setminus K$. Since $H$ is $K$-NT, if $G$ admits $H$ as an induced minor, then there exists a model of $H$ in $G$ where the bags of the vertices of $I$ are trivial. Moreover, observe that for each $u\in K$, the bag of $u$ is a subset of $V(G)$ that induces a connected subgraph of $G$ that contains at least one neighbor of the vertex in the bag of each $v \in V(H)\setminus K$. 
        
        This gives us the following polynomial strategy to detect if $H$ is an induced minor of $G$, and output a model in the positive case:
        We enumerate all the premodels of $H$ where the bags of vertices of $I$ contain exactly one vertex of $G$ and the bags of vertices of $K$ are empty. There are $\O(n^{|I|})$ such premodels.
        Given such a premodel $\X = \{X_v : v\in V(H)\}$, we can check first that for each $v,w\in I$, the vertices in the trivial bags $X_v$ and $X_w$ are not adjacent in $G$. 
        If this condition is not satisfied, we can reject the premodel. 
        
        Next, we try to construct $k$ pairwise disjoint subsets $Z_1,\dots, Z_k$ of size at most $|I|$ of $V(G) \setminus(\bigcup_{X\in \X} X)$ such that 
        for each $i \in [k]$ and each $v\in I$, we have $|Z_i\cap N(X_v)| = 1$. 
        Observe that there are $\O(n^{k|I|})$ such sets. Moreover, observe that there might be no such set of subsets, in this case, we can reject the current premodel.
        
        Then, for each possible $Z_1,\dots, Z_k$, we use \cref{thm:rooted_minor} to determine in polynomial time if $G \setminus(\bigcup_{X\in \X} X)$ contains a model $\X'= \{X'_1,\dots, X'_k\}$ of $K_k = H\setminus I$ such that for each $i$, $Z_i\subseteq X'_i$. If the answer is yes, then we get a model of $H$ by replacing the empty sets of $\X$ by the sets of $\X'$.
        If we did not find a model of $K_k$ for any choice of premodel $\X$ and subsets $Z_1,\dots, Z_k$, then we can conclude that $G$ does not contain $H$ as an induced minor.
        The algorithm described here takes polynomial time as $H$ is fixed. 
    \end{proof}

    \begin{lemma}\label{lem:splitDBB}
        The graph $S_{k,p}$ with $k\leq 3$ is $K$-NT where $K$ is the clique of $S_{k,p}$.
    \end{lemma}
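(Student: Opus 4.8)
The plan is to start from a model $\XH$ of $H=S_{k,p}$ in $G$ that minimizes the total size $\sum_{v\in I}|X_v|$ of the bags of the independent set $I=V(H)\setminus K$; such a minimizing model exists, and in it every bag $X_v$ with $v\in I$ is minimal. Since each $v\in I$ is adjacent in $H$ exactly to the $k\leq 3$ vertices of $K$, the bag $X_v$ is a connected set adjacent to the (at most three) clique bags and to no other bag of $I$. I would prove that each such $X_v$ is in fact trivial; combined with the freedom for the bags of $K$ to be non-trivial, this is exactly the assertion that $S_{k,p}$ is $K$-NT. For $k\leq 2$ the vertices of $I$ have degree at most $2$, so their bags are already a single vertex or an induced path by \cref{lem:minModelSmallDeg}, and the reduction below is elementary; the substantive case is $k=3$.

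The key structural observation I would exploit is that $K$ is a universal clique of $S_{k,p}$: every vertex of $K$ is adjacent to every other vertex. Consequently the clique bags are required to be adjacent to all other bags, so enlarging a bag $X_a$ with $a\in K$ can never create a forbidden adjacency; moreover, since $v$ is non-adjacent to every other vertex of $I$, no vertex of $X_v$ is adjacent to another bag of $I$, so moving vertices out of $X_v$ into a clique bag is always safe with respect to the independent set. This is what allows us to shrink the bags of $I$ while only growing the bags of $K$.

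Next I would describe the structure of a minimal bag $X_v$. For each $a\in K$ fix a vertex of $X_v$ adjacent to $X_a$, take a spanning tree of $G[X_v]$, and inside it the minimal subtree $T$ connecting these (at most three) attachment points. Then $G[V(T)]$ is connected and adjacent to all clique bags, so minimality of $X_v$ forces $V(T)=X_v$. Being a minimal subtree spanning at most three vertices of a tree, $T$ is either a path or a subdivided claw, and hence has a single branch vertex $s$ together with at most three legs, each ending at an attachment point. When $T$ is a path, I take $s$ to be the attachment point lying between the other two (or the common attachment point when two coincide), so that in every case each clique bag is adjacent either to $s$ itself or to the endpoint of one of the legs, and the legs are pairwise internally disjoint.

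Finally, assuming $|X_v|\geq 2$, I would reduce $X_v$ to the single vertex $s$: for each clique bag $X_a$ not already adjacent to $s$, I move all vertices of the corresponding leg except $s$ into $X_a$. Each enlarged clique bag stays connected, because the moved vertices form a path hanging off an attachment point of $X_a$, and stays valid because $K$ is universal; since the legs are internally disjoint, the bags remain pairwise disjoint. The resulting bag $\{s\}$ is adjacent to each clique bag (through the first vertex of the corresponding leg, or directly when a leg is trivial) and to no other bag of $I$, so we obtain a genuine model of $H$ with strictly smaller $\sum_{v\in I}|X_v|$, contradicting the choice of $\XH$. Hence every bag of $I$ is trivial and $S_{k,p}$ is $K$-NT. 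The main obstacle I anticipate is the structural step for $k=3$ — proving that a minimal bag is spanned by a path or subdivided claw and handling the degenerate cases where two attachment points coincide or the third terminal attaches to an interior vertex of a path — together with the careful verification that the simultaneous pushes preserve connectivity and disjointness of all bags.
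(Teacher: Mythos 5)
Your proposal is correct and matches the paper's proof in all essentials: both take a model minimizing the bags of the independent set $I$, exhibit inside a hypothetical non-trivial bag a tripod with a single branch vertex $s$ whose at most three internally disjoint legs reach neighbors of the three clique bags, and then push the legs into the clique bags (which is safe precisely because $K$ is universal), shrinking the bag to $\{s\}$ and contradicting minimality, with $k\leq 2$ dispatched via \cref{lem:minModelSmallDeg}. The only cosmetic difference is that you obtain the tripod as the minimal Steiner subtree of a spanning tree of $G[X_v]$ and invoke bag-minimality to get $V(T)=X_v$, whereas the paper extracts it directly as a shortest path between neighbors of $X_a$ and $X_c$ plus a shortest path from a neighbor of $X_b$ to it; both implementations are sound.
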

    \begin{proof}  
        If $k < 3$, the degree of the vertices in the independent part of $S_{k,p}$ is at most $2$, and hence the conclusion follows from \cref{lem:minModelSmallDeg,lem:minModelPath}. 
        We consider now the case $k=3$.
        Let $H=S_{3,p}$ for some $p\geq 1$, and let $a,b,c$ be the three vertices of the clique in $H$, and $I$ the independent set of size $p$ in $H$.
        Let $G$ be a graph containing $H$ as an induced minor, and $\XH$ a model of $H$ in $G$ that minimizes the bags of $I$. 
        Suppose that there is a vertex $u$ in $I$ whose bag is non-trivial. 
        Let $P$ be a shortest path in $X_u$ from a neighbor of $X_a$ to a neighbor of $X_c$, denoted respectively by $x_a$ and $x_c$. 
        Moreover, let $P_{b}$ be a shortest path in $X_u$ from a neighbor of $X_b$ to a vertex in $P$, say $x_u$. 
        Let $P_a = x_uPx_a$ and $P_c = x_uPx_c$.
        Then the paths $P_a,P_b,P_c$, are disjoint except in $x_u$. 
        Therefore, if we replace in $\XH$ the bags $X_a$, $X_b$, $X_c$, and $X_u$ respectively by $\big (X_a \cup V(P_a)\setminus \{x_u\} \big ), \big (X_b \cup V(P_b)\setminus  \{x_u\} \big ), \big (X_c \cup V(P_c)\setminus \{x_u\} \big )$ and $\{x_u\}$, we obtain a new model of $H$ in $G$ included in $\XH$. 
        Moreover, in this model, the bag of $u$ only contains $x_u$, and thus is smaller than in $\XH$ (and the bags of the other vertices of $I$ are the same as in $\XH$), contradicting the choice of $\XH$. 
    \end{proof}

    Observe that the above result is not true if $k>3$. Indeed, there might be no vertex of $G$ of degree at least $k$ in the bags of the vertices of the independent part of $S_{k,p}$.
    Combining \cref{lem:splitDBB} and \cref{thm:DBBpoly}, we thus obtain \cref{thm:splitPoly}, restated below.
    
    \ThmCompleteSplits*
    
    \section{\texorpdfstring{\IMC{$H$}}{H-IMC} on graphs with no long induced paths} \label{sec:Pt_free}

    In this section, we show that if the input graph does not contain long induced paths, then \IMC{$H$} can be solved in polynomial time, for any fixed graph $H$.
    This allows us to develop polynomial-time algorithms for \IMC{$Gem$} and \IMC{$\FH$}. 
    In what follows, we write that a graph is $P_t$-free if it excludes the path on $t$ vertices as an induced subgraph. 
    The idea of the following result is that a minimal induced minor model in a $P_t$-free graph contains a bounded number of vertices. Hence, an exhaustive search of the possible models of $H$ can be done in polynomial time. 
    
    \ThmPtFree*
    
    \begin{proof}
    Let $G$ be a $P_t$-free graph that contains $H$ as an induced minor. 
    We may assume that $t \geq 2$; otherwise $G$ has no vertex and the problem becomes trivial.
    
    Let $\XH$ be a minimal model of $H$ in $G$. 
    We claim that the size of a bag $X_u$ of $\XH$ is bounded by $1+ \deg_H(u) \cdot (t-2)$.
    If $\deg_H(u) \leq 1$, then \cref{lem:minModelSmallDeg} implies that $|X_u|=1$, which is bounded from above by $1+ \deg_H(u) \cdot (t-2)$ since $t \geq 2$.
    Assume now that $\deg_H(u) > 1$ and let $x \in X_u$ be a vertex adjacent to some other vertex in $X_v$, for $v\in N_H(u)$.
    Let $N = N_H(u) \setminus \{v\}$, and consider a set $S\subseteq X_u$ of at most $\deg_H(u) -1$ vertices, such that, for every $w\in N$, it holds $S \cap N_G(X_w) \neq \emptyset$.
    For each $y\in S$, fix a shortest path from $x$ to $y$ in $G[X_u]$.
    Note that, since $G$ is $P_t$-free, each such path contains $x$ plus at most $t-2$ vertices.
    Let $X'_u$ be the union of the vertex sets of all these paths.
    Hence, $|X'_u| \leq 1+ \deg_H(u) \cdot (t-2)$.
    Observe that $G[X'_u]$ is connected and that, by construction, there exists an edge between $X'_u$ and $X_w$, for every $w \in N_H(u)$.
    It follows by the minimality of $X_u$ that $X_u = X'_u$, and thus we have that $|X_u| \leq 1+ \deg_H(u) \cdot (t-2)$, as claimed.
    
    From this bound on the size of minimal bags in models of $H$ in $P_t$-free graphs, we derive the following algorithm.
    We try every combination of $|V(H)|$ subsets of $V(G)$ of size at most $1+(|V(H)|-1)\cdot(t-2)$, and we either find a model of $H$ or conclude that $H\nim G$. 
    \end{proof}

    \authorcite{BKRT19} showed that the class of $H$-induced minor-free graphs are well-quasi-ordered by induced minors if and only if $H$ is an induced minor of the Gem or $\FH$. Moreover, they showed decomposition theorems for these two classes of graphs. We make use of these theorems to test \IMC{$H$} in polynomial time for the Gem and $\FH$. 
    
    \begin{theorem}[{\cite[Theorem~3]{BKRT19}}]
        \label{thm:gem}
        Let $G$ be a $2$-connected graph such that $Gem \nim G$.
        Then $G$ has a subset $X \subseteq V(G)$ of at most six vertices such that every connected component of $G \setminus X$ is either a cograph or a path whose internal vertices are of degree two in $G$. 
    \end{theorem}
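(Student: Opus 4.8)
The plan is to reduce the entire decomposition to a single forbidden configuration and then let $2$-connectivity organize what is left over. The engine is a \emph{domination lemma} that I would prove first: if $P=p_1\dots p_k$ is an induced path of $G$ and $D\subseteq V(G)\setminus V(P)$ is connected with $|N(D)\cap V(P)|\geq 4$, then $Gem\im G$. Indeed, picking four neighbours $p_{i_1},\dots,p_{i_4}$ of $D$ on $P$ with $i_1<\dots<i_4$, one contracts $D$ into a single apex bag and cuts the subpath $p_{i_1}\dots p_{i_4}$ into four consecutive arcs, each containing one $p_{i_j}$. Consecutive arcs of an induced path are adjacent while non-consecutive ones are not, so the four arcs induce a $P_4$, and the apex is complete to all of them: this is a $Gem$. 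Since $Gem\nim G$, the contrapositive yields the key \emph{bounded-interaction} property: every connected set disjoint from an induced path of $G$ meets it in at most three neighbours.

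With this property I would set up the hub. If $G$ is itself a cograph we are done with $X=\emptyset$; otherwise $G$ contains an induced $P_4$. The strategy is to let $X$ be a hub that absorbs every ``branching'' or ``dense'' vertex, so that each component of $G\setminus X$ is left either $P_4$-free (a cograph) or a bare induced path whose internal vertices have degree $2$ in $G$ (a thread). Bounded interaction is exactly what guarantees that a component $C$ of $G\setminus X$ which is \emph{not} a cograph, \ie contains an induced $P_4$, cannot simultaneously carry a chord or a long-range connection inside $C$: such a feature would exhibit a connected subset of $C$ with four neighbours on an induced path of $C$, hence a $Gem$. What bounded interaction does \emph{not} forbid locally is branching — a subdivided claw, for instance, is a tree and so has no $Gem$ induced minor — so branch vertices must be placed into $X$ ``by hand'', and the whole content of the theorem is that doing this costs at most six vertices.

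Establishing $|X|\leq 6$, together with the claim that the internal vertices of the thread-components have degree $2$ in all of $G$, is the heart of the theorem and the step I expect to be the main obstacle, since both statements are global and rely on $2$-connectivity rather than on the local lemma. For the degree-$2$ claim, an external neighbour of an internal thread-vertex $q$ can be combined with a return path — which exists because $G-q$ is connected — to enrich the attachment pattern around $q$ until four neighbours appear on an induced subpath, contradicting bounded interaction; the delicate point is organizing these return paths so that only boundedly many vertices end up genuinely branching. For the bound $6$, I would fix a shortest witness of non-cographness (a shortest induced $P_4$, or a shortest induced cycle meeting the dense part) and apply bounded interaction repeatedly to limit the number of vertices that can attach to it in essentially distinct ways, the explicit constant emerging from a finite case analysis anchored to the five vertices of a $Gem$ and the two endpoints at which $2$-connectivity forces each thread to attach. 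Ruling out, uniformly over all $2$-connected inputs, every arrangement of seven or more essential vertices that still avoids an apex over an induced $P_4$ is where the real difficulty lies.
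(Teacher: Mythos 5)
First, a point of reference: the paper you are proving this for does not prove the statement at all --- it is imported verbatim from \cite[Theorem~3]{BKRT19} and used as a black box, so the benchmark is the original proof there. Your opening \emph{domination lemma} is correct and is indeed the natural first move: if $D$ is connected, disjoint from an induced path $P$, and has at least four neighbours on $P$, then contracting $D$ and splitting the subpath between the first and last neighbour into four consecutive arcs yields an induced minor model of the Gem (an observation of exactly this kind also underlies the original argument).

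Everything past that lemma, however, is a plan rather than a proof, and you say so yourself: the construction of $X$, the bound $|X| \leq 6$, and the fact that internal vertices of the path components have degree two in $G$ (not merely in their component) are all deferred to an unspecified ``finite case analysis''; these three claims are the entire content of the theorem. Worse, the one bridging claim you do assert --- that a non-cograph component cannot carry a chord or a long-range connection because such a feature ``would exhibit a connected subset with four neighbours on an induced path'' --- does not follow from your lemma and is false as stated: a chord or a connecting path attaches at only two vertices, and your bounded-interaction property permits up to three attachments. Concretely, a long cycle with a single chord is $2$-connected, is far from being a cograph, and carries a chord, yet has no Gem induced minor (its cycle rank is $2$ while the Gem's is $3$, and cycle rank is minor-monotone; the theorem holds for it with $X$ the two chord endpoints). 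So the lemma alone cannot drive the dichotomy: the genuine work is the global analysis, organized around $2$-connectivity and the attachment structure of pieces onto a fixed path or cycle, that extracts the constant six and the degree-two-in-$G$ condition, and none of that work is present in your proposal.
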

    
    The idea for the algorithm is the following.
    If a graph $G$ does not have the structure of Gem-induced-minor-free graphs, then we conclude that $Gem \im G$; otherwise, we show that we can check if $Gem \im G$ in polynomial time in the restricted structure of $Gem$-induced minor-free graphs.
    
    \begin{theorem}\label{th:gemPoly}
        \IMC{$Gem$} is polynomial-time solvable.
    \end{theorem}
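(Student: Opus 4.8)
The plan is to combine the structural decomposition of \cref{thm:gem} with the algorithm for $P_t$-free graphs from \cref{thm:ptfree}. Since the decomposition applies only to $2$-connected graphs, I would first reduce to this case. Observe that $Gem$ is $2$-connected, and that whenever $H$ is $2$-connected and $H \im G$, then $H \im B$ for some block (maximal $2$-connected subgraph) $B$ of $G$: given a model, if the subgraph it induces has a cut vertex $x$, then $x$ lies in at most one bag, so every other bag is confined to a single side of $x$; since $H$ has no cut vertex, all bags other than the one containing $x$ lie on one side, and restricting the model to that side yields a model in a graph with fewer blocks, so induction applies. Thus $Gem \im G$ if and only if $Gem \im B$ for some block $B$, and it suffices to decide $Gem \im B$ for each of the (linearly many) blocks.

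Fix a block $B$. I would enumerate every subset $X \subseteq V(B)$ with $|X| \leq 6$ and test whether each connected component of $B \setminus X$ is a cograph or a path whose internal vertices have degree two in $B$; both conditions are checkable in polynomial time, and there are only $\O(|V(B)|^6)$ candidates. If no such $X$ exists, then by the contrapositive of \cref{thm:gem} we have $Gem \im B$ and may answer positively. Otherwise we obtain a set $X$ witnessing the decomposition, which we use in the next step.

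Given such an $X$, the only obstruction to $B$ being $P_t$-free for a constant $t$ is the presence of long path components, since cographs are $P_4$-free and $|X| \leq 6$. I would therefore build a graph $B'$ from $B$ by \emph{smoothing} each path component, keeping its two endpoints together with their adjacencies into $X$ but contracting its interior down to a path of constant length $L$ (with $L \geq |V(Gem)|$). In $B'$ any induced path meets each component in an induced subpath, of at most $3$ vertices in a cograph and at most $L$ vertices in a smoothed path, and it crosses at most $|X|+1$ components, so $B'$ is $P_t$-free for some $t = \O(1)$. One then runs the algorithm of \cref{thm:ptfree} on $B'$. For correctness I must show that $Gem \im B$ if and only if $Gem \im B'$. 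The reverse direction is easy: $B$ is a subdivision of $B'$ along degree-two paths, and subdividing such edges preserves induced-minor containment, since the new degree-two subdivision vertices can be absorbed into one of the two incident bags without creating spurious adjacencies.

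The main obstacle is the forward direction, namely that smoothing does not destroy a $Gem$ model. The idea is that in a minimal model each path component carries at most $|V(Gem)| = 5$ bags, and the intersection of each bag with the interior of a path component is a bounded union of subpaths; hence on a path component of length exceeding a fixed constant there is always either an unused interior edge or an edge both of whose endpoints lie in a single bag, and contracting such an edge preserves the model while shortening the component. Iterating reduces every path component to length at most $L$ without changing whether $Gem$ is an induced minor, which is exactly the effect of passing to $B'$. Making the counting in this last step fully rigorous—in particular handling bags that reach a single component through $X$ at both of its endpoints—is the delicate part of the proof.
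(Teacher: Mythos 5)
Your proposal is correct and takes essentially the same route as the paper's proof: reduce to $2$-connected components, enumerate the candidate sets $X$ of size at most six from \cref{thm:gem}, contract the interiors of the long degree-two path components to bounded length, and conclude with \cref{thm:ptfree} on the resulting $P_t$-free graph. The only (immaterial) differences are that the paper contracts each path interior all the way down to $P_3$ and merely asserts the equivalence $Gem \im G \Leftrightarrow Gem \im G'$ as an observation, whereas you contract to a larger constant $L$ and sketch the run-counting/exchange argument justifying that step --- so your version is, if anything, more explicit on the one point the paper leaves to the reader.
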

    
    \begin{proof}
    First, since the Gem is 2-connected, we may assume that $G$ is $2$-connected. If it is not the case, we can simply consider the $2$-connected components of $G$ independently.
    
    The algorithm is as follows: We test all subsets $X \subseteq V$ of size at most six and check whether the connected components of $G \setminus X$ meet the requirements of \cref{thm:gem}, that is, are cographs or paths whose internal vertices have degree 2 in $G$.
    Note that cographs, which are exactly $P_4$-free graphs, can be recognized in linear time~\cite{CPS85}. 
    If such a set $X$ does not exist, then $Gem\im G$.
    Hence, we may assume that the algorithm finds such a set $X$.
    We contract the internal vertices of components of $G \setminus X$ that are paths of length at least $3$ to $P_3$.
    Let $G'$ be the obtained graph and observe that $Gem \im G'$ if and only if $Gem \im G$.
    Since $|X| \leq 6$, the longest induced path in $G'$ is of length at most $26$, obtained by alternating between paths on at most $3$ vertices in $G \setminus X$ and vertices of $X$. Therefore, $G'$ is $P_{28}$-free and we can use \cref{thm:ptfree} to conclude.
    \end{proof} 
    
    We use a similar approach for the Full House (denoted $\FH$ hereafter), but the structure of $\FH$-induced minor-free graphs is more subtle, leading to more cases to consider. A \emph{wheel} is defined 
    as a graph obtained from a cycle $C$ together with an isolated vertex that is adjacent 
    to at least one vertex of $C$. A graph $G = (V,E)$ is \emph{complete multipartite} if 
    its vertex set can be partitioned into pairwise completely adjacent sets. 
    
    \begin{theorem}[{\cite[Theorem~2]{BKRT19}}]
        \label{thm:fullhouse}
        Let $G$ be a $2$-connected graph such that $\FH  \nim G$.
        Then one of the following property holds:
        \begin{enumerate}
            \item\label[property]{item:fh1} $K_4 \nim G$,
            \item\label[property]{item:fh2} $G$ is a subdivision of a graph among $K_4$, $K_{3,3}$ and the prism,
            \item\label[property]{item:fh3} $V(G)$ has a partition $(W,M)$ such that $G[W]$ is a wheel on at most $5$ vertices and $G[M]$ is a complete multipartite graph,
            \item\label[property]{item:fh4} $V(G)$ has a partition $(C,I)$ such that $G[C]$ is a cycle, $I$ is an independent set and every vertex of $I$ has the same neighborhood on $C$. 
        \end{enumerate}
    \end{theorem}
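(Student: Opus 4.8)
The plan is to condition on whether $K_4 \im G$. If $K_4 \nim G$ we are in the first alternative of \cref{thm:fullhouse} and there is nothing to do, so assume $K_4 \im G$ and fix an inclusion-minimal model $\{A,B,C,D\}$ of $K_4$ in $G$. The engine of the whole argument is the following \emph{two-attachment principle}: if some connected set $E \subseteq V(G)\setminus(A\cup B\cup C\cup D)$ is adjacent to \emph{exactly} two of the four bags, then $\{A,B,C,D,E\}$ is a model of $\FH$. Indeed, contracting the five bags yields $K_4$ on $\{A,B,C,D\}$ together with one extra vertex adjacent to exactly two of them, which is precisely $\FH$; this contradicts $\FH\nim G$. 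Crucially the principle applies to \emph{every} connected subset, not just to $E$ itself, so it is a strong local rigidity statement. Note also that attaching to three bags is harmless, since the corresponding quotient is $K_5$ minus an edge, which contains $\FH$ neither as a subgraph nor (having only five vertices) as a proper induced minor; likewise four bags give $K_5$. Thus the only forbidden attachment pattern is "exactly two".

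From this principle I would extract the attachment structure of the part $R = V(G)\setminus(A\cup B\cup C\cup D)$. When the $K_4$ is an induced subgraph on $\{a,b,c,d\}$ (all bags trivial), $2$-connectivity forbids a component of $G-\{a,b,c,d\}$ from touching a single bag, since that bag would be a cut vertex; combined with the two-attachment principle this forces every such component to attach to three or four of $\{a,b,c,d\}$, and no connected piece of $R$ may see exactly two. I would let $M$ be the vertices of the resulting "dense core'' — the core itself together with vertices seeing at least three of it — and show that, because any two cliques that are not joined would expose the exactly-two pattern along a connecting path, $G[M]$ must be complete multipartite. The remaining, low-attachment vertices are then pinned down by $2$-connectivity and the principle to wrap around a hub of $M$ as the rim of a \emph{small} wheel, giving the wheel-plus-complete-multipartite alternative (third case).

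The complementary situation is when no bounded dense core survives, i.e.\ $K_4$ is only a genuine minor and the minimal model is forced to spend its vertices on long internally disjoint paths. Here the branch vertices (degree at least $3$) form a very restricted configuration: any extra branching or chord would let me reroute the bags and realise an exactly-two attachment, hence $\FH$. This should force $G$ to be a subdivision of a $2$-connected cubic base graph, and a finite check — verifying that the subdivisions of $K_4$, $K_{3,3}$ and the prism are genuinely $\FH$-free while the subdivision of any other cubic base contains $\FH$ — yields the second alternative. Finally, the intermediate regime, in which the governing structure is a single long induced cycle, is handled by showing that all out-attachments must form an independent set sharing one common neighbourhood on the cycle (otherwise two vertices on the cycle, together with a path through two distinct attachment points, reproduce the exactly-two pattern), which is exactly the fourth alternative; wheels $W_n$ appear here as the special case of a one-vertex independent set.

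The step I expect to be hardest is promoting the purely local two-attachment principle to the two clean global partitions of the third and fourth alternatives. The difficulty is that the principle only forbids the \emph{exactly-two} pattern, so one must repeatedly lift single-vertex information (attachments of type three or four) to statements about all of $R$ while simultaneously re-choosing the $K_4$-model to keep it compatible, and the bound "wheel on at most five vertices'' comes precisely from the interplay between contractible degree-two paths and this rerouting freedom. A secondary but real obstacle is exhaustiveness: the sparse/dense split is \emph{not} a clean dichotomy along "induced $K_4$ vs.\ not'' — small wheels and cycle-plus-independent-set graphs can fall on either side — so care is needed to check that the four alternatives, taken together, cover every $2$-connected $\FH$-free graph and overlap only harmlessly.
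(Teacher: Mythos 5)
First, a point of calibration: the paper does not prove this statement at all --- it is imported verbatim from the literature (it is stated with the citation \cite[Theorem~2]{BKRT19} and used as a black box in the proof of \cref{th:fhPoly}), so there is no in-paper proof to compare your attempt against. Judged on its own, your opening move is sound: the ``two-attachment principle'' is correct (a connected set adjacent to exactly two bags of a $K_4$-model yields a model of $\FH$, since $\FH$ is precisely $K_4$ plus a vertex adjacent to two of its vertices), and the observation that attachments to three or four bags quotient to $K_5^-$ or $K_5$, neither of which is $\FH$ and neither of which can properly contain a $5$-vertex induced minor, is also correct. Some principle of this kind must indeed underlie any proof of the decomposition.

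However, as a proof the proposal has genuine gaps, and you partly name them yourself. The entire substance of the theorem is the promotion of the local exactly-two prohibition to the four \emph{global} alternatives, and every such step is only asserted (``I would show'', ``should force'', ``is handled by showing''): that the dense core is complete multipartite, that the leftover vertices wrap into a wheel on at most $5$ vertices (note the paper's nonstandard wheel definition --- a cycle plus a vertex adjacent to \emph{at least one} cycle vertex --- which is what lets, e.g., $K_5$ fall under \cref{item:fh3}; your sketch does not engage with this), and that in \cref{item:fh4} all attachments share a common neighborhood on the cycle. One specific claim is wrong as stated: the sparse case is not ``a finite check''. After suppressing degree-$2$ vertices the base graph is merely a $2$-connected graph of minimum degree at least $3$, not necessarily cubic, and even restricted to cubic bases there are infinitely many candidates (prisms over longer cycles, M\"obius--Kantor-type ladders, \ldots); ruling out everything except $K_4$, $K_{3,3}$ and the prism requires an actual argument that any larger base admits a rerouting exposing an exactly-two attachment, not an enumeration. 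Finally, the exhaustiveness of your sparse/dense/cycle trichotomy --- that the three regimes cover all $2$-connected $\FH$-free graphs with $K_4 \im G$ --- is flagged as an obstacle but never resolved; without it the case analysis does not close. So the proposal is a plausible plan built on a correct key lemma, but it is not yet a proof, and it cannot be credited as matching or diverging from the paper's argument since the paper supplies none.
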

    
    \begin{theorem}\label{th:fhPoly}
        \IMC{$\FH $} is polynomial-time solvable.
    \end{theorem}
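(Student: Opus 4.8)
The plan is to follow the proof of \cref{th:gemPoly}, replacing \cref{thm:gem} by the structural dichotomy of \cref{thm:fullhouse}. Since $\FH$ is $2$-connected, as in the proof of \cref{th:gemPoly} I may assume $G$ is $2$-connected and handle its blocks independently. Each of the four properties of \cref{thm:fullhouse} is recognizable in polynomial time: \cref{item:fh1} by a $K_4$-minor test; \cref{item:fh2} by matching vertex degrees and the subdivision paths against the three fixed patterns; \cref{item:fh3} by guessing the at most five vertices of $W$ and recognizing a complete multipartite graph on the rest; and \cref{item:fh4} by testing whether $V(G)$ partitions into an independent set whose vertices share a common neighborhood and a set inducing a cycle. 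If $G$ satisfies none of the four properties, then by the contrapositive of \cref{thm:fullhouse} we have $\FH \im G$ and answer positively; otherwise we decide $\FH \im G$ from the relevant structure.

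Property \cref{item:fh1} is immediate: since $K_4 \im \FH$, we have $K_4 \nim G \Rightarrow \FH \nim G$. Properties \cref{item:fh2} and \cref{item:fh3} reduce to a $P_t$-free instance and invoke \cref{thm:ptfree}. A graph satisfying \cref{item:fh3} is already $P_t$-free for a fixed $t$, because an induced path meets $W$ in at most $|W| \le 5$ vertices, splitting it into at most six maximal subpaths inside the complete multipartite (hence $P_4$-free) part, each on at most three vertices. For \cref{item:fh2}, $G$ is a subdivision of one of three fixed graphs, so it has a bounded number of branch vertices joined by internally degree-two paths; contracting the \emph{internal} edges of each such path creates no new adjacency between branch vertices and, as $\FH$ has only five vertices, preserves whether $\FH \im G$ while shortening every path to bounded length. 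The resulting graph has bounded size, hence is $P_t$-free. (Since the graphs of \cref{item:fh2} have treewidth at most three, this case could instead be handled by the bounded clique-width algorithm of \cite{CMR00,FK24}.)

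The main obstacle is \cref{item:fh4}, where $G$ consists of a cycle $C$ and an independent set $I$ whose vertices all have the same neighborhood $N \subseteq C$; here $C$ may be arbitrarily long, so $G$ need not be $P_t$-free. First I bound $|I|$: the vertices of $I$ are pairwise twins, so I may keep only a bounded number of them (at least two when $|I|\ge 2$) without changing the answer or membership in \cref{item:fh4}. Next I contract the internal edges of each arc of $C$ lying strictly between two consecutive vertices of $N$ (such arcs consist of degree-two vertices), shortening each arc to bounded length without changing $\FH \im G$. If the number $|N|$ of remaining distinguished vertices is bounded, the whole graph becomes bounded, hence $P_t$-free, and \cref{thm:ptfree} concludes.

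It remains to treat a long contracted cycle, equivalently $|N|$ large. Here I expect the decisive fact to be a dichotomy on $|I|$. If $|I| \le 1$, then $\FH \nim G$: the three bags forming the triangle of the $K_4$ must be pairwise cycle-adjacent arcs, which forces them to cover the whole cycle and leaves no room for the fifth vertex, the single possible apex not helping to realize any of these arc-to-arc adjacencies. If $|I| \ge 2$ and the cycle is long, then $\FH \im G$, which I would establish by an explicit construction that places the two twins into two of the four bags of the $K_4$, uses two short cycle segments for the other two $K_4$-bags, and takes as the fifth bag either a short $N$-free arc or a single well-chosen vertex of $N$ far from the other bags. The hardest part will be making this construction uniform over the sub-cases dictated by the cyclic arrangement of $N$ — in particular whether two vertices of $N$ are consecutive on $C$ and whether $N = C$ — since the fifth bag must be adjacent to exactly two of the $K_4$-bags while avoiding the neighborhoods of the other two; showing that a valid choice always exists once $|I| \ge 2$ and $|N|$ exceeds a fixed constant is the crux of the argument.
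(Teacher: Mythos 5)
Your skeleton matches the paper's: since $\FH$ is $2$-connected you restrict to $2$-connected $G$, test the four outcomes of \cref{thm:fullhouse}, answer positively if none holds, and treat each outcome separately; \cref{item:fh1} and \cref{item:fh3} are handled exactly as in the paper. For \cref{item:fh2} you diverge: the paper settles it by a direct characterization (subdivisions of $K_4$ never contain $\FH$; a subdivided prism contains $\FH$ if and only if some triangle edge is subdivided; a subdivided $K_{3,3}$ if and only if some edge is subdivided, with explicit models in \cref{fig:full_house_model}), whereas you compress the subdivision paths and invoke \cref{thm:ptfree}. That route is viable but rests on an asserted lemma --- that suppressing internal edges of long degree-$2$ paths, and likewise deleting false twins beyond a bounded number in \cref{item:fh4}, preserves $\FH$-containment --- which you would need to prove (the paper makes an analogous compression for the Gem, so this is in its spirit).

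The genuine gap is in \cref{item:fh4}, which is the heart of the paper's proof. There the paper establishes a complete characterization: $\FH \im G$ if and only if $|I|\geq 2$ and either $I$ misses a vertex of $C$ with $|C|\geq 4$ and the $I$-vertices of degree at least $3$, or $N=C$ and $|C|\geq 5$; both non-containment arguments and explicit models are supplied. Your plan defers precisely this --- the construction when $|I|\geq 2$ and $|N|$ is large --- and labels it ``the crux,'' so the decisive step is missing, and your sketch of it does not work as stated. The two twins cannot occupy two singleton $K_4$-bags, because vertices of $I$ are pairwise non-adjacent while all four $K_4$-bags must be pairwise adjacent; in the paper's models one twin is merged with a cycle vertex into its bag to create that adjacency. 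Worse, in the subcase $N=C$ there is no ``short $N$-free arc,'' and any candidate fifth bag consisting of cycle vertices is automatically adjacent to \emph{every} bag containing a twin, so its two permitted neighbors are forced to be exactly the two twin-containing bags while it must avoid both pure-arc bags --- a constraint your proposed choices do not visibly satisfy, and one that genuinely fails at the boundary ($|C|=4$, $N=C$, $|I|\geq 2$ gives $\FH \nim G$, as the paper proves). Your dichotomy itself, including the $|I|\leq 1$ non-containment argument via three pairwise-adjacent arcs covering $C$, is correct and consistent with the paper's characterization, but its positive half is exactly the content the paper supplies and you omit.
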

    
    \begin{proof}
    First, since $\FH$ is 2-connected, we may assume that $G$ is $2$-connected; otherwise, we may simply consider the $2$-connected components of $G$ independently. 
    Note that \cref{item:fh1,item:fh2,item:fh3,item:fh4} can all be tested in polynomial time.
    In particular, \cref{item:fh2} can be tested in polynomial time by iteratively contracting an edge incident to a vertex with degree $2$ until there is none, 
    and then checking if the resulting graph is $K_4$, $K_{3,3}$ or the prism.
    Hence, we first test, in polynomial time, whether at least one of the properties of \cref{thm:fullhouse} holds for $G$.
    If that is not the case, then $G$ must contain $\FH$ as an induced minor.
    Let us hence assume that at least one of these properties holds. 
    
    If \cref{item:fh1} holds: Then $\FH \nim G$ since a graph that does not contain $K_4$ as an induced minor cannot contain $\FH$ as an induced minor. 
    
    If \cref{item:fh2} holds: If $G$ is a subdivision of a $K_4$, then $\FH \nim G$. If $G$ is a subdivision of the prism, then $\FH \im G$ if and only if at least one edge of a triangle of the prism is subdivided. Indeed, if no edge of a triangle is subdivided, then all the vertices of $G$ must be used to obtain an induced minor model of $K_4$, and thus $\FH \nim G$. However, if an edge of a triangle is subdivided, then we have a model of $\FH$ as illustrated in \cref{fig:full_house_model}. Note that this construction can be straightforwardly extended if there is more than one subdivision.
    Similarly, if the graph is $K_{3,3}$, then $\FH \nim G$, but if it has at least one subdivided edge, then $\FH \im G$ as illustrated  in \cref{fig:full_house_model}. 
    
    If \cref{item:fh3} holds: 
    Note that an induced path in a multipartite graph is of length at most $2$. Moreover, since $|W| \leq 5$ it follows that an induced path in $G$ is of bounded length. Hence, we can use \cref{thm:ptfree} to conclude. 
    
    If \cref{item:fh4} holds: We show that $\FH  \im G$ if and only if $|I|\geq 2$ and either:
        \begin{enumerate}
            \item $I$ is not adjacent to at least one vertex of $C$, $|C|\geq 4$, and vertices in $I$ have degree at least $3$, or
            \item $I$ is adjacent to every vertex of $C$ and $|C|\geq 5$.
        \end{enumerate}
    Observe first that if the degree of the vertices in $I$ is at most $2$, then $K_4 \nim G$ as there are at most $2$ vertices of degree at least $3$ in $G$. Suppose now that the vertices in $I$ have degree at least  $3$, and observe that $K_4 \im G$. If $|I|=1$, any induced minor model of $K_4$ in $G$ contains all vertices of $G$, hence $\FH  \nim G$. Suppose now that $|I|\geq 2$. If $|C| = 3$, then observe that $\FH  \nim G$.
    If there is a vertex $u\in C$ not adjacent to $I$, then for $|C| = 4$, $|I| = 2$, we can construct a model of the $\FH $ as illustrated in \cref{fig:full_house_model}. 
    This extends straightforwardly for $|C| \geq 4$ and $|I| \geq 2$.
    
    Suppose now that $I$ is adjacent to every vertex of $C$. 
    Observe that at least $3$ bags of a model of $K_4$ must contain vertices of $C$. Hence, the bag of the degree-$2$ vertex of $\FH$ must not contain vertices of $I$. Therefore, if $|C| = 4$ all the vertices of $C$ are in different bags, and one bag must contain one vertex of $I$. In such model, each is adjacent to at least $3$ other bags, we can conclude that if $|C| = 4$ and $|I|\geq 2$, then $\FH \nim G$.
    If $|C| = 5$ and $|I|=2$, then we can construct a model of the $\FH $ as illustrated in \cref{fig:full_house_model}. 
    This extends straightforwardly for $|C| \geq 5$ and $|I| \geq 2$.
    \end{proof}

    \begin{figure}
        \centering
        \includegraphics[width=0.9\linewidth]{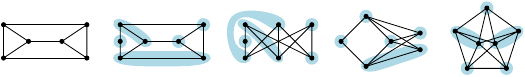}
        \caption{From left to right: the prism; models of $\FH$ in a subdivided prism and a subdivided $K_{3,3}$; models of $\FH$ for graphs with the \cref{item:fh4} in \cref{thm:fullhouse}. }
        \label{fig:full_house_model}
    \end{figure}

    \printbibliography
    
    \end{document}